\documentclass{jmlr}

\newif\ifjournalversion
\journalversionfalse

\usepackage{booktabs}
\usepackage{multirow}
\usepackage{microtype}
\usepackage{enumitem}
\usepackage{tikz}
\usetikzlibrary{positioning,arrows.meta,fit,backgrounds,calc,shapes.geometric,matrix}

\definecolor{cblue}{HTML}{0072B2}
\definecolor{corange}{HTML}{D55E00}
\definecolor{cgreen}{HTML}{009E73}
\definecolor{cgray}{HTML}{555555}
\definecolor{cpanel}{HTML}{F2F2F2}

\makeatletter
\let\ps@jmlrtps\ps@jmlrps
\makeatother

\graphicspath{{figures/}}

\theorembodyfont{\itshape}
\theoremheaderfont{\bfseries}
\theorempostheader{.}
\theoremsep{\newline}
\newtheorem{prop}{Proposition}

\newtheorem{thm}[prop]{Theorem}

\newcommand{\groupG}{G}
\newcommand{\groupSM}{S_M}
\newcommand{\groupSN}{S_N}
\newcommand{\groupBN}{B_N}
\newcommand{\groupZTwoY}{Z_2^{y}}
\newcommand{\groupZTwoN}{Z_2^{N}}
\newcommand{\maxdev}{\mathrm{max\_dev}}
\newcommand{\jaccard}{\mathrm{Jaccard}}
\newcommand{\valacc}{\mathrm{val\_acc}}
\newcommand{\Rpred}{R_{\mathrm{pred}}}
\newcommand{\Ravg}{R_{\mathrm{avg}}}

\title[Symmetry-Aware Foundation Model for Rule Induction]{Pretrain on Small Synthetic Data, Scale Large for Free: Symmetry-Aware Foundation Model for Logic Rule Induction}

\author{\Name{Yin Jun Phua} \Email{phua@comp.isct.ac.jp}\\
 \addr Institute of Science Tokyo}

\hypersetup{%
  pdftitle={Pretrain on Small Synthetic Data, Scale Large for Free: Symmetry-Aware Foundation Model for Logic Rule Induction},%
  pdfauthor={Yin Jun Phua}%
}

\begin{document}

\maketitle

\begin{abstract}

Logical rule induction seeks interpretable rules that transfer across propositional schemas. This requires respecting symmetries: atom naming, example order, polarity flips, and label swap.
Enforcing exact symmetry by construction lets one trained inducer scale beyond its training schemas.
Our central contribution is a canonical export that decodes a discrete rule from literal scores. It needs no retraining and is exactly equivariant whenever those scores respect the symmetries.
We instantiate it on the Neural Rule Inducer~\citep{Phua2026NRI}, a disjunctive-normal-form (DNF) foundation model that natively respects only example order. We restore the remaining symmetries through architecture, inference, and training.
On synthetic stress tests, accuracy on the support labels stays stable at much larger schemas, and rule fidelity on fresh inputs remains above the unmodified model. On real data, accuracy improves most on larger schemas. The exported rule is exact on synthetic full-group tests and on schema-valid real-data tests.
This is a mathematical property of the export rather than of a specific model, and we validate it empirically only on the NRI. Enforcing symmetry by construction turns this small-data pretrained model into a reusable, interpretable inducer that transfers to larger schemas.

\end{abstract}

\section{Introduction}
\label{sec:intro}

Inductive Logic Programming (ILP)~\citep{Muggleton1991ILP,Muggleton1994ILP,Quinlan1990FOIL,Cropper2021Popper} and the Learning from Interpretation Transition (LFIT) setting~\citep{Inoue2014LFIT} learn interpretable logical hypotheses, but behave poorly when the data are noisy, partially observed, or contain unseen transitions.
Neuro-symbolic work~\citep{Garcez2023NSAI,DeRaedt2020NeSy} aims to keep this interpretability while adding the noise tolerance and zero-shot generalisation of neural networks. A foundation model trained once on synthetic logic episodes should export a compact candidate rule for a new schema in one pass (Figure~\ref{fig:pipeline}).
Logical rules respect structural symmetries by construction, so a zero-shot inducer should respect them as domain properties rather than learning targets.

\begin{figure}[t]
\centering
\resizebox{\linewidth}{!}{
\begin{tikzpicture}[
  font=\footnotesize,
  >={Latex[length=2.2mm]},
  stage/.style={draw=cgray, rounded corners=2pt, fill=cpanel, align=center,
                inner sep=3pt, minimum height=1.15cm, text width=2.5cm},
  backbone/.style={draw=cgray!65, dashed, rounded corners=4pt, fill=none,
                   inner xsep=8pt, inner ysep=1pt},
  rulebox/.style={draw=cgreen!70!black, rounded corners=2pt, fill=cgreen!10,
                  align=center, inner sep=3pt, minimum height=1.15cm, text width=2.6cm},
  tbox/.style={draw=cblue!60, rounded corners=2pt, fill=cblue!6, align=center,
               inner sep=4pt, minimum height=0.8cm},
  cell/.style={draw=cgray!45, line width=0.2pt, minimum size=3.4mm,
               font=\tiny, inner sep=0pt, anchor=center},
  flow/.style={-{Latex[length=2.4mm]}, line width=0.9pt, cgray},
  gmap/.style={-{Latex[length=2.4mm]}, line width=1.0pt, cblue},
  sym/.style={{Latex[length=1.6mm]}-{Latex[length=1.6mm]}, cgray!85, line width=0.5pt},
]

\matrix (epi) [matrix of nodes, nodes={cell, fill=cblue!8},
               column sep=0pt, row sep=0pt, anchor=center] at (0,0) {
  1 & 0 & 0 & 0 & |[fill=corange!45]| 1 \\
  0 & 1 & 1 & 0 & |[fill=corange!45]| 0 \\
  1 & 1 & 0 & 1 & |[fill=corange!45]| 1 \\
  0 & 0 & 1 & 1 & |[fill=corange!45]| 0 \\
};
\draw[sym] ($(epi-1-1.north west)+(0,3pt)$) -- ($(epi-1-4.north east)+(0,3pt)$);
\node[font=\scriptsize, anchor=south, cgray] at ($(epi-1-1.north)!0.5!(epi-1-4.north)+(0,5pt)$) {$S_N,Z_2^{N}$};
\node[font=\scriptsize, anchor=south, corange!75!black] at ($(epi-1-5.north)+(0,5pt)$) {$Z_2^{y}$};
\draw[sym] ($(epi-1-1.north west)+(-4pt,0)$) -- ($(epi-4-1.south west)+(-4pt,0)$);
\node[font=\scriptsize, rotate=90, anchor=south, cgray] at ($(epi-1-1.west)!0.5!(epi-4-1.west)+(-8pt,0)$) {$S_M$};

\node[stage, right=0.8cm of epi] (enc)
  {\textbf{Permutation-}\\[-1pt]\textbf{invariant}\\[-1pt]\textbf{encoder}};

\matrix (sco) [matrix of nodes, nodes={cell}, column sep=0pt, row sep=0pt,
               right=0.8cm of enc] {
  |[draw=none, font=\scriptsize]| $p^{+}$ & |[fill=cblue!88]| & |[fill=cblue!18]| & |[fill=cblue!10]| & |[fill=cblue!14]| \\
  |[draw=none, font=\scriptsize]| $p^{-}$ & |[fill=cblue!14]| & |[fill=cblue!30]| & |[fill=cblue!82]| & |[fill=cblue!28]| \\
};
\node[font=\scriptsize, anchor=north] (scorelbl) at ($(sco.south)+(-4pt,-3pt)$) {scores $(p^{+},p^{-})$, slot $1$};

\node[stage, right=0.8cm of sco, text width=2.95cm] (exp)
  {\textbf{Canonical export}\\[2pt]
   {\scriptsize $s_j{=}\max(p^{+}_j,p^{-}_j)$\\[-1pt]$d_j{=}p^{+}_j{-}p^{-}_j$}};

\node[rulebox, right=0.8cm of exp] (rule)
  {$\hat R{=}(x_1\!\wedge\!\lnot x_3)$\\[-1pt]$\vee\,(x_2\!\wedge\! x_4)$};

\draw[flow] (epi.east) -- (enc.west);
\draw[flow] (enc.east) -- (sco.west);
\draw[flow] (sco.east) -- (exp.west);
\draw[flow] (exp.east) -- (rule.west);

\node[font=\scriptsize\itshape, cblue!70!black, align=center, anchor=south]
  at ($(enc.north)+(0,18pt)$) {(1)~zero literal-sign\\(2)~drop content keys};
\node[font=\scriptsize\itshape, cblue!70!black, anchor=south]
  at ($(sco.north)+(0,18pt)$) {(3)~label-swap avg.};
\node[font=\scriptsize\itshape, cblue!70!black, anchor=south]
  at ($(exp.north)+(0,1pt)$) {(4)~canonical export};
\node[font=\scriptsize\itshape, cblue!70!black, anchor=north]
  at ($(enc.south)+(0,-13pt)$) {(5)~symmetrised training};

\coordinate (nriTop) at ($(enc.north)!0.5!(sco.north)+(0,15pt)$);
\begin{scope}[on background layer]
\node[backbone, fit=(enc) (sco) (scorelbl) (nriTop)] (nri) {};
\end{scope}
\node[font=\scriptsize\bfseries, cgray, anchor=north west]
  at ($(nri.north west)+(3pt,-1pt)$) {original NRI};

\draw[cgray, thick] (epi-1-4.north east) -- (epi-4-4.south east);
\draw[cgray!80] (sco-1-2.north west) rectangle (sco-2-5.south east);

\node[tbox] (epiT) at ($(epi.center)-(0, 1.95cm)$) {$g\cdot(X,Y)$};
\node[tbox] (ruleT) at ($(rule.center)-(0, 1.95cm)$) {$g\cdot\hat R$};
\draw[gmap] (epi.south) -- (epiT.north) node[midway, right=1pt, cblue]{$g$};
\draw[gmap] (rule.south) -- (ruleT.north) node[midway, right=1pt, cblue]{$g$};
\draw[flow, dashed] (epiT.east) -- (ruleT.west)
  node[midway, above, cblue]{$\hat R(g\cdot(X,Y))\equiv g\cdot\hat R(X,Y)$, all $g\in G$};

\end{tikzpicture}}
\caption{\textbf{G-NRI, our symmetry-restored NRI.} The export commutes with every example, atom, polarity, and label transform in $\groupG$, defined below.}
\label{fig:pipeline}
\end{figure}

$\delta$LFIT2~\citep{Phua2024VAINN} brought a foundation-model framing to the LFIT setting. It recovers transition rules and respects atom permutation by construction. However, it covers only Herbrand bases (sets of all ground atoms) of at most $18$ atoms and does not handle noise or missing data. The Neural Rule Inducer (NRI)~\citep{Phua2026NRI} extended this to generic disjunctive normal form (DNF) rules and to noisy, partially observed data. We adopt its foundation-model framing.
Both are trained at small scale, the NRI on only $6$--$12$ variables. Because its weights are not tied to specific atoms, one NRI checkpoint can in principle run on any number of atoms, a \emph{variable-schema interface}. Its symmetry breaks make this unreliable, however. It does not undo polarity flips or label swaps, and a learned per-example component makes its output depend on atom ordering again. So beyond the training range its accuracy drops and its exported rules stop transforming correctly under label swaps.

Enforcing these symmetries by construction removes such shortcuts rather than asking training to approximate them. Our exact guarantees are about commutation. Under each symmetry, the scores and the exported rule transform exactly as they should. The scaling question is whether enforcing this makes the variable-schema interface reliable far beyond the training range.
No retraining or per-dataset fitting is added. The construction adds no learned parameters and costs a small constant factor at inference.

We make three contributions, the second our central one.
\emph{(i)}~We cast binary rule induction under the symmetry group $\groupG = \groupSM \times \groupBN \times \groupZTwoY$ (example order, atom renaming and negation, and label swap), where a map is \emph{$\groupG$-equivariant} when it commutes with every transform in $\groupG$. The NRI is equivariant only to $\groupSM$, and our construction \emph{G-NRI} restores all of $\groupG$.
\emph{(ii)}~Our central contribution lifts equivariance from continuous scores to the \emph{discrete} exported rule, with no retraining. The group is classical, but the guarantee on the discrete output is new.
\emph{(iii)}~We confirm three claims (RQ1--RQ3, Section~\ref{sec:results}): exact no-retraining rule equivariance, reliable scaling far beyond training, and improved real zero-shot transfer with compact rules.

This guarantee is a mathematical property of the export rather than of any specific model. The canonical export (Theorem~\ref{thm:quotient}) applies to any inducer exposing $\groupG$-equivariant literal scores, but our experiments instantiate and validate it only on the NRI.

\section{Background and Symmetry Desiderata}
\label{sec:bg}

\looseness=-1
A rule-induction problem has many equivalent presentations. Reordering the examples, renaming or reordering the atoms, flipping an atom's polarity, or swapping the two labels each transform the target rule in a corresponding, reversible way. A map is \emph{invariant} under such a transform when its output is unchanged and \emph{equivariant} when its output undergoes the matching transform. Example order asks for invariance and the others for equivariance, so relabelling an episode's atoms relabels the exported rule the same way. Equivariance also enables scaling. An inducer keyed to the arbitrary names, positions, or polarities of atoms takes a shortcut that breaks past the training range, whereas an equivariant inducer cannot and transfers from small schemas to large.

\subsection{The Neural Rule Inducer (NRI)}

We adopt the binary NRI as a representative DNF foundation-model inducer (Figure~\ref{fig:pipeline}). Trained once on many synthetic episodes, it infers a rule for a new episode in one forward pass, with no per-task training. An episode is $(X,Y,R)$, a small labelled dataset of $M$ examples over a schema of $N$ atoms, with $X\in\{0,1\}^{M\times N}$ and $Y\in\{0,1\}^{M}$. Rows of $X$ index the $M$ examples and columns the $N$ atoms, so $X[:,j]$ is its $j$-th column and $X[i,:]$ its $i$-th row. The hidden DNF $R$ satisfies $R(X)=Y$, and the model must recover it from $(X,Y)$ alone. A \emph{literal-statistics encoder} computes, for each literal (an atom $x_j$ or its negation $\neg x_j$), statistics of how it relates to the label across the examples. A \emph{slot decoder} then fills a fixed set of clause \emph{slots}. For each slot it emits soft gates in $[0,1]$, the per-atom inclusion probabilities $(p^{+}_j, p^{-}_j)$ for the literals $x_j$ and $\neg x_j$ together with a clause-activation gate. Two mechanisms matter for the symmetry analysis below. The model combines information across the $M$ examples with attention, using a learned key for each example. A \emph{pair memory} scores pairs of literals (an atom with its own negation is a \emph{self-pair}) and keeps the top $k$ by score to initialise the clause slots. Multiplying these gates (a product t-norm) gives the continuous prediction $\Rpred(X)$, and thresholding them gives the discrete DNF $\hat R$.
The decoder runs two class \emph{rails}, one per output class, and no parameter is atom-indexed. The NRI (our \emph{baseline}) can therefore in principle generalise across schema sizes $N$.

\subsection{Symmetry group of the problem}

\paragraph{Group-theory recap.}
A \emph{group} collects transformations that we can compose and invert and that include an identity $e$ (doing nothing). The symmetries of an episode form one such group, and for a group element $g$ we write $g\cdot z$ for the result of applying $g$ to an object $z$, whether an episode or a rule. The \emph{symmetric group} $S_n$ is the set of all reorderings of $n$ items, so $\groupSM$ permutes the $M$ examples and $\groupSN$ the $N$ atoms. The group $\groupZTwoN=\{0,1\}^{N}$ assigns one on/off bit per atom, each flipping that atom's polarity through exclusive-or $\oplus$, so flipping twice restores it. The single bit $\groupZTwoY=\{0,1\}$ swaps the two labels. A \emph{direct product} $A\times B$ applies independent transforms to separate parts at once, whereas a \emph{semidirect product} $A\ltimes B$ couples them. In $\groupBN=\groupSN\ltimes\groupZTwoN$, permuting the atoms also relabels which atom each polarity flip acts on. Finally, $H\le\groupG$ denotes a \emph{subgroup}, a subset that is itself a group.

The natural symmetry group of a binary rule-induction episode is generated by four operations (Figure~\ref{fig:pipeline}): example-row permutations $\groupSM$, atom-column permutations $\groupSN$, per-atom polarity flips $\groupZTwoN = \{0,1\}^{N}$ ($X[:, j] \mapsto X[:, j] \oplus \sigma_j$), and label swap $\groupZTwoY = \{0, 1\}$ ($Y \mapsto Y \oplus \tau$). We write $\groupBN := \groupSN \ltimes \groupZTwoN$ for the signed atom permutations (an atom permutation $\pi$ together with an independent per-atom polarity flip $\sigma$) and $\groupG := \groupSM \times \groupBN \times \groupZTwoY$.
Each factor also acts on the rule space, re-indexing literals, rewriting $x_j \mapsto \neg x_j$, or taking $R$ to its Boolean complement by De Morgan rewriting. Of these, $\groupBN$ and $\groupZTwoY$ are symmetries of the underlying Boolean function, while only $\groupSM$ is specific to the episode.
On real data, only an encoding-preserving subgroup $H_\Sigma \le \groupBN$ keeps every example a valid encoding. A raw $\groupBN$ transform can break one-hot exclusivity, so we use it only as a schema-invalid stress test.

We check equivariance at two levels. Prediction equivariance compares the prediction on the transformed episode, $\Rpred(g\cdot(X,Y))$, with the transformed prediction on the original. Rule equivariance asks whether $\hat R(g\cdot(X,Y))$ is logically equivalent ($\equiv$) to $g\cdot\hat R(X,Y)$ after inverse alignment. Neither implies semantic correctness, so we measure rule fidelity to the true generator separately (Figure~\ref{fig:scaling}). We report $\maxdev$ (maximum drift), rule Jaccard (aligned literal overlap), rule\_eq (logical-equivalence rate), and validation accuracy ($\valacc$).

\section{A $\groupG$-Equivariant Construction}
\label{sec:recipe}

We restore $\groupG$ with five components, grouped by the stage at which they act.
\emph{Architecture}, recovering $\groupBN$ in the encoder: \emph{(1)} zero the $\mathrm{literal\_sign}$ feature and \emph{(2)} disable the example-attention content keys.
\emph{Eval-time export}, recovering $\groupZTwoY$ for the prediction and the full $\groupG$ for the rule: \emph{(3)} a label-swap prediction average and \emph{(4)} a canonical, tie-complete rule export.
\emph{Training}: \emph{(5)} symmetrise the loss.
Example order $\groupSM$ is already exact in the NRI. The full model, \emph{G-NRI}, uses all five, adds no parameters, and keeps the NRI's zero-shot interface.

\subsection{Architectural fixes for $\groupBN$}
\label{sec:recipe-arch}

Three parts of the NRI (Section~\ref{sec:bg}) break $\groupBN$. The encoder adds a constant $\mathrm{literal\_sign}$ feature that separates $x_j$ from $\neg x_j$. A $\groupZTwoN$ flip changes it, so we zero it. The example attention uses a learned key per example, which lets aggregation depend on atom order, so we disable its content keys. The last break is the pair memory's $\mathrm{top}\text{-}k$ selection. Training drives each atom and its negation to equal scores, so the self-pairs tie; a fixed $\mathrm{top}\text{-}k$ breaks that tie by atom index, and an atom permutation then changes which pairs are kept, breaking $\groupSN$.

The first two breaks vanish once we zero the feature and disable the content keys. The third is subtler. The trained model keeps a fixed number of top pairs and breaks ties by atom index, so at eval time we instead use a \emph{tie-complete} selection that keeps every pair whose score ties the $k$-th best. Because it reads only the set of scores, not atom order, it is permutation-equivariant even at the ties training creates. For $\groupZTwoN$ we likewise average each selected pair's embedding over its polarity flips. Both act at eval time and need no retraining.

\begin{prop}[\textbf{$\groupBN$ equivariance of the soft literal score map}]
\label{prop:bn}
With $\mathrm{literal\_sign}_j {=} 0$, example-content keys disabled, tie-complete pair selection, polarity-averaged pair embeddings, and no atom-indexed parameter, the soft per-literal score map $(p^{+}, p^{-}) : (X, Y) \mapsto [0,1]^{N} \times [0,1]^{N}$ (the positive- and negative-literal inclusion probabilities) is $\groupBN = \groupSN \ltimes \groupZTwoN$ equivariant in exact arithmetic (proof, with the measured drift of the unmodified forward pass, in Appendix~\ref{app:proofs}).
\end{prop}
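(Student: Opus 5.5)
We prove the claim by composition. We write the score map as a pipeline of stages and show each stage is $\groupBN$-equivariant in exact arithmetic on the literal-indexed representation. Concretely, we fix the action of $(\pi,\sigma)\in\groupBN$ on the $2N$ literals: $\pi$ sends literal slot $(j,\pm)$ to $(\pi(j),\pm)$, and $\sigma_j=1$ swaps the two polarities of atom $j$. On the input, the same element acts by $X[:,j]\mapsto X[:,\pi^{-1}(j)]\oplus\sigma_{\pi^{-1}(j)}$. The target identity is
\[
(p^{+},p^{-})\bigl((\pi,\sigma)\cdot(X,Y)\bigr)=(\pi,\sigma)\cdot(p^{+},p^{-})(X,Y),
\]
where the right-hand side permutes coordinates by $\pi$ and swaps $p^{+}_j\leftrightarrow p^{-}_j$ wherever $\sigma_j=1$. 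Since a composition of equivariant maps is equivariant, it suffices to verify each layer in turn.

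\emph{Step 1 (encoder inputs).} The literal value of $x_j$ is $X[:,j]$ and that of $\neg x_j$ is $1-X[:,j]$. A flip $\sigma_j$ therefore exchanges these two columns exactly, and a permutation $\pi$ relabels them. The per-literal statistics (co-occurrence with $Y$, marginal rates, and so on) are computed by one shared function of a literal's column and $Y$, so they inherit the action. The only feature that depends on the polarity \emph{label} rather than the data is $\mathrm{literal\_sign}$, and it is zeroed.

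\emph{Step 2 (shared layers).} All subsequent layers apply atom-independent weights, since there are no atom-indexed parameters. They aggregate over literals only through permutation-invariant pooling (sums or means, or attention whose keys and queries are functions of the literal features). Such layers are $S_{2N}$-equivariant on literal tokens, and $\groupBN$ acts as a subgroup of $S_{2N}$ on those tokens, so each layer is $\groupBN$-equivariant. The example attention is the one place where a learned per-example key could couple examples to atom ordering. With content keys disabled, its weights are computed from quantities that are themselves functions of the literal multiset, so they are invariant under $\groupBN$.

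\emph{Step 3 (pair memory), the main obstacle.} Pair scores come from a shared function of two literal embeddings, so the score array is equivariant. The difficulty is the selection. A fixed top-$k$ with an index tie-break is not a function of the score multiset. The tie-complete rule keeps $\{(a,b): s_{ab}\ge s_{(k)}\}$, where $s_{(k)}$ is the $k$-th largest score. Because $s_{(k)}$ is invariant under any permutation of the scores, the selected set transforms by exactly $g$. For $\groupZTwoN$, a flip maps a pair $(x_i,x_j)$ to a polarity variant. Averaging each selected pair's embedding over its polarity orbit makes the stored embedding a function of the orbit alone, so flips permute slots without changing their contents. The slots must then be consumed through a permutation-invariant readout, such as attention over a set with no slot positional encoding. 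We state this as a standing assumption, and it is the point we would check most carefully against the architecture. Given it, the decoder's per-literal gates $(p^{+}_j,p^{-}_j)$ are a shared function of the literal-$j$ token and a $g$-invariant context, and the result follows. Exactness holds in real arithmetic. The floating-point drift of reordered reductions is addressed separately, by measurement.
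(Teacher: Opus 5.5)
Your argument is correct and takes essentially the same route as the paper's proof. It is the same stage-by-stage composition: per-literal statistics come from a shared column function with $\mathrm{literal\_sign}$ zeroed, the example aggregation is index-free once content keys are disabled, the tie-complete selector depends only on the multiset of pair scores and so commutes with $\pi$, and polarity averaging of the selected pair embeddings handles $\groupZTwoN$. Your standing assumption that the selected pairs are consumed by a permutation-invariant readout is left implicit in the paper, so stating it is added care rather than a gap (the paper also describes the disabled-content-key attention keys as $\groupSN$-equivariant and $\groupZTwoN$-invariant rather than fully $\groupBN$-invariant, which does not affect the conclusion).
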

We call the fixed forward the \emph{equivariant forward}. The tie break it removes is structural, not numerical: it persists in double precision, and the equivariant forward drives the measured atom-permutation drift to the floating-point floor at almost no accuracy cost (Appendix~\ref{app:proofs}).

\subsection{Eval-time operations for $\groupZTwoY$}
\label{sec:recipe-eval}

For an episode $(X, Y)$, the \emph{label-swap test-time average}, a two-pass test-time augmentation (TTA), is
\begin{equation}
\Ravg(X, Y) \;=\; \tfrac{1}{2}\big(\Rpred(X, Y) + 1 - \Rpred(X, 1{-}Y)\big).
\label{eq:tta}
\end{equation}

A direct substitution gives $\Ravg(X, 1{-}Y) = 1 - \Ravg(X, Y)$ exactly in real arithmetic, for any $\Rpred$ and any $(X, Y)$, so the averaged prediction is label-swap equivariant. This is standard two-pass test-time augmentation~\citep{Shanmugam2021TTA}, and it acts at the prediction level only.

The average fixes the prediction, not the rule, since the DNF read off one rail need not be the De Morgan complement a label swap requires. Call the two passes of Eq.~\ref{eq:tta} the \emph{rails}. A label swap exchanges them and, with them, their predictions and literal scores. The \emph{dual-rail export} uses this, choosing in a label-swap-invariant way between the positive rail's rule and the negative rail's De Morgan complement. The result is a \emph{signed} DNF, read as that rule or as its complement, and it already gives exact label-swap rule equivalence. The next subsection subsumes it into a full-$\groupG$, rule-level guarantee.

\subsection{Canonical rule export}
\label{sec:recipe-quotient}

The dual-rail export still uses the NRI's original per-rail decode, which is not $\groupBN$-equivariant. It orders equal-scoring atoms by index, which breaks $\groupSN$, and thresholds each rail on its own, so a near-threshold literal can flip in or out under a polarity swap, which breaks $\groupZTwoN$.
We replace this readout with a \emph{canonical decoder} that summarises each atom $j$'s two literal probabilities $(p^{+}_j, p^{-}_j)$ by two statistics. (The superscripts index \emph{literal polarity}, and each rail carries its own such pair.) The presence score $s_j = \max(p^{+}_j, p^{-}_j)$ is invariant under the polarity swap of $x_j$ and $\neg x_j$. The signed contrast $d_j = p^{+}_j - p^{-}_j$ has a sign that flips under that swap.
Atom $j$ enters a clause when $s_j \ge \tfrac12$ and $d_j \neq 0$, with polarity set by the indicator $\mathbf{1}[d_j > 0]$ (one when $d_j>0$, zero otherwise), so an exact polarity tie ($d_j{=}0$) omits the atom rather than choosing a side. A tie-complete rule resolves ties in $s_j$. It admits the whole tie-bucket or abstains, and never breaks a tie by atom index. The export emits literals and clauses in a canonical order.
Figure~\ref{fig:pipeline} traces one slot. Only atoms $1$ and $3$ pass the presence threshold, with $d_1 > 0$ and $d_3 < 0$, so the slot decodes to $x_1 \wedge \neg x_3$.
A dual-rail selector then chooses between the positive-rail rule and the negative rail's De Morgan complement with a label-swap-invariant comparator (a symmetrised reference and a sign-flipping key, Appendix~\ref{app:proof-quotient}), abstaining on an exact rail tie. This export is a canonicalisation~\citep{Kaba2023Canonicalization,Puny2022FrameAveraging} and needs no retraining. It recovers the dual-rail export as its $\groupZTwoY$-only special case.

\begin{thm}[\textbf{Exact rule-level $\groupG$-equivariance}]
\label{thm:quotient}
Suppose the model's paired literal \emph{scores} $(p^{+}, p^{-})$ are $\groupG$-equivariant: invariant under $\groupSM$, permuted under $\groupSN$, swapped per atom under $\groupZTwoN$, and exchanged between rails under $\groupZTwoY$.
Then the canonical export, with the polarity-tie and rail-tie abstentions above, tie-complete selection, and canonical ordering, yields a signed DNF $\hat{R}$ with $\hat{R}(g\cdot(X,Y)) \equiv g\cdot\hat{R}(X,Y)$ for every $g\in\groupG$, exactly (up to logical equivalence).
\end{thm}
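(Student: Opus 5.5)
The plan is to verify commutation generator by generator. The canonical export $E$ is a deterministic function of the score tensor alone: the per-rail, per-slot pairs $(p^{+}_j,p^{-}_j)$ together with the clause gates. Given that, it suffices to show $E(g\cdot p)\equiv g\cdot E(p)$ for each generator type. Since $\groupG$ is generated by $\groupSM$, $\groupSN$, single-atom flips in $\groupZTwoN$, and $\groupZTwoY$, and both $\hat R(g\cdot(X,Y))$ and $g\cdot\hat R(X,Y)$ are compatible with composition, the result for general $g$ follows by induction on word length. Logical equivalence is preserved under each rule-space action, because renaming, negating atoms and De Morgan complement are all Boolean automorphisms.

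\textbf{$\groupSM$.} The scores are invariant by hypothesis, so $E$ returns the identical rule. The rule-space action of $\groupSM$ is trivial, and there is nothing more to show.

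\textbf{$\groupSN$.} Here $p$ is permuted by $\pi$, so $s$ and $d$ are permuted by $\pi$. The presence test $s_j\ge\frac12$, the polarity test $\mathbf 1[d_j>0]$ and the abstention $d_j=0$ are all per-atom, so the literal set of each slot is mapped by $\pi$. The tie-complete selection depends only on the multiset of score values: it admits or abstains on entire tie buckets. Its decisions are therefore permutation-equivariant, and I would state this as a small lemma mirroring the tie-complete pair selection of Proposition~\ref{prop:bn}. Canonical ordering affects only the syntax, not the Boolean function. A DNF is a set of sets of literals, so the renamed rule is logically equal to $\pi\cdot\hat R$.

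\textbf{Single-atom flip $\sigma_j$.} Swapping $p^{+}_j\leftrightarrow p^{-}_j$ leaves $s_j$ fixed and negates $d_j$. Presence and tie buckets are therefore unchanged. If $d_j\neq0$ the chosen literal switches from $x_j$ to $\neg x_j$ or vice versa, and if $d_j=0$ the atom is omitted in both cases. The result is exactly the substitution $x_j\mapsto\neg x_j$ in every clause, which is the rule-space action.

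\textbf{$\groupZTwoY$.} Label swap exchanges the two rails. The per-rail decoded DNFs $D_+$ and $D_-$ are therefore exchanged, and the signed candidates ``$D_+$'' and ``$\lnot D_-$'' become ``$D_-$'' and ``$\lnot D_+$''. These are precisely the complements of the original candidates. What remains is to show that the selector picks the corresponding candidate on the swapped input. The comparator uses a symmetrised reference, which is invariant, and a sign-flipping key $\kappa$ with $\kappa(\text{swapped})=-\kappa(\text{original})$. So if the original selects the positive rail ($\kappa>0$), the swapped episode selects the negative rail's complement, which is $\lnot D_+$, and conversely. If $\kappa=0$, both outputs abstain, and abstention is fixed by the action. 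Hence $\hat R(X,1-Y)\equiv\lnot\hat R(X,Y)$, which is the De Morgan action on signed DNFs.

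\textbf{Main obstacle.} I expect the selector step to be the hardest. I have to check that the comparator defined in Appendix~\ref{app:proof-quotient} really has this antisymmetry. I also have to check that it commutes with $\groupBN$. This should hold because the key is built from $s$ and $|d|$ summaries, which are $\groupBN$-invariant, so the choice of rail is unaffected by $\groupBN$ moves and the per-rail equivariance shown above carries through. A secondary subtlety is ensuring that the signed representation identifies ``$\lnot D$'' with its De Morgan expansion only up to $\equiv$. This is why the statement is made modulo logical equivalence.
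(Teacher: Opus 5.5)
Your generator-by-generator structure is the paper's own, and your $\groupSM$, $\groupSN$ (multiset-based tie buckets) and single-flip steps match its proof. Your word-length induction is a tidier version of its closing remark that the factors act on independent coordinates. The gap is in the $\groupZTwoY$ selector step. That step is the one place that needs more than pointwise bookkeeping.

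\textbf{The choice is made by fits, not by the key.} The export is not a function of the score tensor alone. The dual-rail selector ranks the two candidates by their decoded fit on the episode to the reference $\Rpred^{\mathrm{sym}}=\tfrac12(\Rpred^{+}+1-\Rpred^{-})$. The sign-flipping scalar $h=\tfrac1M\sum_i(\Rpred^{+}-\Rpred^{-})_i$ only breaks ties in that ranking. Reading the choice off the sign of $\kappa$ therefore covers only the tied branch, not the generic case.

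\textbf{The reference is complemented, not invariant.} Because the rails exchange under a label swap, the reference goes to $1-\Rpred^{\mathrm{sym}}$, and this complementing is exactly what makes the step work. On the swapped episode, the positive-rail candidate $D_-$ is scored against $1-\Rpred^{\mathrm{sym}}$. That gives the same fit as $\lnot D_-$ against $\Rpred^{\mathrm{sym}}$. Likewise, the new negative-rail candidate $\lnot D_+$ inherits the old fit of $D_+$. So the fits are exchanged between rails and the winner moves to the other rail. The key $h\mapsto-h$ handles fit ties, and a tie in both keys is preserved, giving symmetric abstention.

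With the invariant reference you posit, each swapped candidate's fit would be the complement of the opposite original candidate's fit (for an agreement-type fit). The same rail would then win again, and the export would return $D_-$ rather than $\lnot D_+$. So the step as you set it up fails.

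\textbf{The same correction settles your $\groupBN$ worry.} The selector's keys come from the rail predictions and the candidates' fits, not from $s$ and $|d|$. These keys are invariant under $\groupSM$ because they are averages over examples. They are invariant under $\groupBN$ because rules and examples are transformed together, $(g\cdot D)(g\cdot x)=D(x)$, and the predictions are computed from the transported scores and gates. The paper's proof leaves this last point implicit.
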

The decoder is exact for one reason. It makes no arbitrary choice. It breaks no tie by atom index, assigns no polarity at an exact contrast tie ($d_j{=}0$), and selects no rail at a key tie, replacing each such choice with a symmetry-respecting abstention or a whole-bucket admission.
The theorem constrains only the score \emph{interface}, not the architecture (proof in Appendix~\ref{app:proof-quotient}). The fixed interface takes an episode $(X,Y)$ with $X\in\{0,1\}^{M\times N}$ and $Y\in\{0,1\}^{M}$, emits per-atom literal scores $(p^{+},p^{-})$ and a clause gate for each clause slot on two class rails, and returns a DNF over the $N$ atoms. Any inducer with this interface and $\groupG$-equivariant scores inherits the guarantee (Table~\ref{tab:guarantees}), with G-NRI the NRI-specific instance. The architectural fixes and tie-complete forward give exact $\groupSM$ score equivariance and, by Proposition~\ref{prop:bn}, $\groupBN$ score equivariance in exact arithmetic. Floating-point arithmetic leaves a tiny residual, but the discrete decode does not change as long as that drift stays below its smallest \emph{margin} to a decision boundary (the presence threshold, a polarity sign change, a tie-bucket gap, or the rail key), so a small enough residual still yields an exactly equivariant rule. Section~\ref{sec:results-ruleeq} confirms this. The discrete rule is exactly equivariant at every evaluated synthetic scale and across the schema-valid real-data audits.

\subsection{Training-time symmetrisation}
\label{sec:recipe-wrapper}

\emph{Symmetrised training} folds the same average into the loss (substituting $\Ravg(X,Y)$ for $\Rpred(X,Y)$, two forward passes per step) and turns off an NRI loss term that competes with it. Its effect over the eval-time-only variant (components~1--4) is small but consistently positive, so that variant is a cheaper fallback.

\subsection{Scope and sparse-support stability}
\label{sec:recipe-disclaimers}

\begin{table}[t]
\centering
\small
\caption{Which symmetries our construction makes exact, approximate, or empirical.}
\label{tab:guarantees}
\footnotesize
\begin{tabular}{@{}llll@{}}
\toprule
Symmetry & Object & Guarantee & Basis \\
\midrule
$\groupSM$ (example order) & prediction & exact & architecture \\
$\groupZTwoY$ (label swap) & prediction, rule & exact & Eq.~\ref{eq:tta}, Thm.~\ref{thm:quotient} \\
$\groupBN$ (atoms, polarity) & literal scores & exact arithmetic & Prop.~\ref{prop:bn} \\
$\groupG$ (full) & exported rule & exact if below margin & Thm.~\ref{thm:quotient} \\
\addlinespace
$H_\Sigma$ (real, schema-valid) & exported rule & exact & Thm.~\ref{thm:schema-quotient} \\
raw $\groupBN$ (real) & exported rule & diagnostic & stress test \\
\bottomrule
\end{tabular}
\end{table}

The architectural fixes recover the equivariance of a permutation-invariant set encoder~\citep{Zaheer2017DeepSets,Lee2019SetTransformer,Hartford2018InteractionsAcrossSets} within the existing architecture.
\label{sec:recipe-theory}%
This is also why the construction should scale. With $\groupBN$-equivariance the encoder cannot read an atom's position, the one cue that grows with $N$, so the decoder's error tracks the few literals the rule uses rather than the atom count $N$ (a sparse-support stability argument).

\section{Experimental Protocol}
\label{sec:protocol}

\paragraph{Synthetic generator and conditions.}
Each episode $(X, Y, R)$ draws a target DNF of at most $k_{\max}{=}6$ clauses and $\ell_{\max}{=}4$ literals, $M\in[24,48]$ examples, and training schema size $N$ uniform over $[6,12]$. We evaluate the frozen model at twelve schema sizes from $N{=}6$ to $N{=}1024$.
We compare the unmodified NRI (\emph{baseline}), \emph{EqArch} (architectural fixes~1--2), and the full \emph{G-NRI} (all five components), with two intermediate conditions for the mechanism decomposition (Table~\ref{tab:decomp}). All share the NRI's training recipe and hyperparameters under paired seeds.

\paragraph{Real datasets.}
We evaluate the same checkpoints on $19$ Boolean-encoded datasets: $14$ tabular benchmarks from the UCI Machine Learning Repository~\citep{Kelly2023UCI}, the three MONK's ILP benchmarks~\citep{Thrun1991MONKS}, MUTAG~\citep{Debnath1991MUTAG} ($51$ features), and CLEVR-Hans3~\citep{Stammer2021CLEVRHans} ($3000$ scenes as $105$ symbolic existential atoms over object attributes; the model never sees the images). Per-dataset $N$ ranges $8$--$116$, with $17$ of $19$ above the training maximum of $12$.
Evaluation is $5$-fold stratified cross-validation with frozen weights. The model conditions on the training fold and exports a DNF, scored on the held-out fold whose labels it never sees. We binarise numeric features at their per-feature median and one-hot encode categoricals. This binarisation is label-free, so it cannot bias the G-NRI-versus-baseline gap (Appendix~\ref{app:preprocessing}).
We report three per-dataset-trained supervised anchors (majority class, logistic regression, decision tree), audit rule-level equivariance on all $19$ datasets (Section~\ref{sec:results-ruleeq}), and stress-test prediction-level drift on $5$ of them.

\paragraph{Seeds, metrics, and statistics.}
Two metrics separate fitting from generalisation. \emph{Support-set accuracy} ($\valacc$) scores the prediction against the conditioning labels. \emph{Fresh-assignment rule fidelity} scores the exported rule on freshly sampled assignments (Figure~\ref{fig:scaling}). All primary statistics were pre-registered and run over $n{=}8$ paired seeds ($n{=}3$ for the drift audits). Appendix~\ref{app:stats} details the unit of analysis, the paired nonparametric tests, and corrections.

\section{Results}
\label{sec:results}

We answer the three research questions of Section~\ref{sec:intro} in turn, reporting the evidence needed to read each result.

\subsection{RQ1: Does the canonical export yield exactly \texorpdfstring{$\groupG$}{G}-equivariant rules with no retraining?}
\label{sec:results-ruleeq}

The output a practitioner uses is the discrete \emph{rule}, so the guarantee must survive decoding. We therefore test the exported rule directly, over $13$ transforms that span the $\groupG$ generators and compositions. We measure \emph{rule\_eq}, the fraction of (episode, transform) pairs for which the rule decoded on the transformed episode, mapped back, is logically equivalent to the rule on the original. The variant \emph{rule\_eq}$_{\mathrm{ne}}$ excludes empty-rule pairs, so empty (abstaining) exports cannot inflate it (Table~\ref{tab:ruleeq}).
On synthetic episodes the full G-NRI is \emph{exactly} $\groupG$-equivariant across the transform set and both audited schema sizes. Both pre-registered criteria, on real-data drift and on synthetic scaling, are also met.

\begin{table}[t]
\centering
\small
\caption{Rule-level equivariance of the no-retraining canonical export: worst-case rule\_eq (\emph{cov.}: fraction with a nonempty exported rule).}
\label{tab:ruleeq}
\begin{tabular}{@{}l c c c@{}}
\toprule
Setting (worst case, 13 transforms) & rule\_eq & rule\_eq$_{\mathrm{ne}}$ & cov. \\
\midrule
Synthetic, full G-NRI & \textbf{1.000} & \textbf{1.000} & 0.817 \\
\quad ablation: drop equivariant forward & 0.967 & 0.966 & -- \\
Real data, 17 of 19 datasets & \textbf{1.000} & \textbf{1.000} & 1.000 \\
\quad mushroom, MUTAG (raw, off-manifold) & 0.785 & 0.785 & 0.867 \\
\bottomrule
\end{tabular}

\end{table}

Both components are necessary. The original index-based decode is not $\groupBN$-equivariant, and the canonical decoder on the \emph{unmodified} forward is inexact (Table~\ref{tab:ruleeq}, upper rows).
On real data, we restrict the transforms to those that keep every example a valid encoding (the \emph{schema-valid} audit). Every dataset with a nontrivial such subgroup $H_\Sigma$ is then exactly equivariant, which ties the guarantee to the actual categorical encodings in the benchmarks.
Grouping the audited (episode, transform) trials by their forward prediction drift, an observable proxy for the score-space margin, shows the exported rule stays logically equivalent wherever the drift is small and diverges only above a threshold (Figure~\ref{fig:drift}). This matches the margin argument (Appendix~\ref{app:proof-quotient}), which bounds the rule change by drift in the decoder's score inputs rather than in the prediction.

\subsection{RQ2: Does restoring symmetry make the existing variable-schema inducer reliable far beyond training?}
\label{sec:results-scaling}

We freeze a checkpoint and evaluate it unchanged out to $N{=}1024$ ($85\times$ the training maximum). This stress test varies only the schema size $N$. The rule language, the generator family, and the maximum rule complexity ($k_{\max}$, $\ell_{\max}$) stay at their training values, so it probes scaling in atom count rather than transfer to fundamentally more complex logical structures. To isolate the gain from restoring symmetry, we compare against the baseline (which omits the symmetries) and against EqArch (the architectural fixes only).
\looseness=-1
Across schema sizes G-NRI keeps support-set accuracy and equivariance metrics stable, improving over the baseline in every seed. The baseline, by contrast, degrades towards chance by $N{=}1024$ (Figure~\ref{fig:scaling}). Held-out rule fidelity remains substantially above the baseline, and the lift \emph{widens} with $N$. At the largest synthetic schema the model remains above target, and on real data \emph{clevr-hans3} is an out-of-distribution (OOD) case where it beats the majority class.
By mechanism, the architectural fixes drive most of the gain (Table~\ref{tab:decomp}a). They give the largest single increment at both audited schema sizes. Label-swap averaging adds less, the rule export leaves $\valacc$ unchanged because it only affects rule extraction, and symmetrised training adds a marginal residual. The eval-time-only variant, without symmetrised training, reproduces G-NRI's accuracy to within a couple of points, so it is a cheap fallback. The construction is also cheap at inference (Table~\ref{tab:decomp}b): the two-pass operations make the deployed forward-plus-export pipeline $1.2$--$1.8\times$ the single-pass baseline, with no measurable change in peak memory.

\begin{table}[t]
\centering
\footnotesize
\setlength{\tabcolsep}{4pt}
\caption{Mechanism decomposition (a) and deployed inference cost (b).}
\label{tab:decomp}
\label{tab:runtime_memory}
\begin{minipage}[t]{0.57\linewidth}
\centering
\textbf{(a) Accuracy by added component}\\[2pt]
\begin{tabular}{lcc}
\toprule
& \multicolumn{2}{c}{val\_acc} \\
\cmidrule{2-3}
Condition & N=128 & N=256 \\
\midrule
baseline & $0.772 \pm 0.144$ & $0.613 \pm 0.142$ \\
+ EqArch & $0.877 \pm 0.018$ & $0.854 \pm 0.052$ \\
+ TTA & $0.913 \pm 0.026$ & $0.910 \pm 0.033$ \\
+ rule export & $0.913 \pm 0.026$ & $0.910 \pm 0.033$ \\
G-NRI (all) & $0.930 \pm 0.015$ & $0.925 \pm 0.026$ \\
\bottomrule
\end{tabular}

\end{minipage}\hfill
\begin{minipage}[t]{0.41\linewidth}
\centering
\textbf{(b) Inference cost ($M{=}32$)}\\[2pt]
\begin{tabular}{rrrr}
\toprule
$N$ & base (ms) & G-NRI (ms) & ratio \\
\midrule
12 & 14.4 & 17.9 & 1.24$\times$ \\
128 & 16.3 & 22.0 & 1.35$\times$ \\
1024 & 21.5 & 38.2 & 1.78$\times$ \\
\bottomrule
\end{tabular}

\end{minipage}
\end{table}

\begin{figure}[t]
\centering
\includegraphics[width=\linewidth]{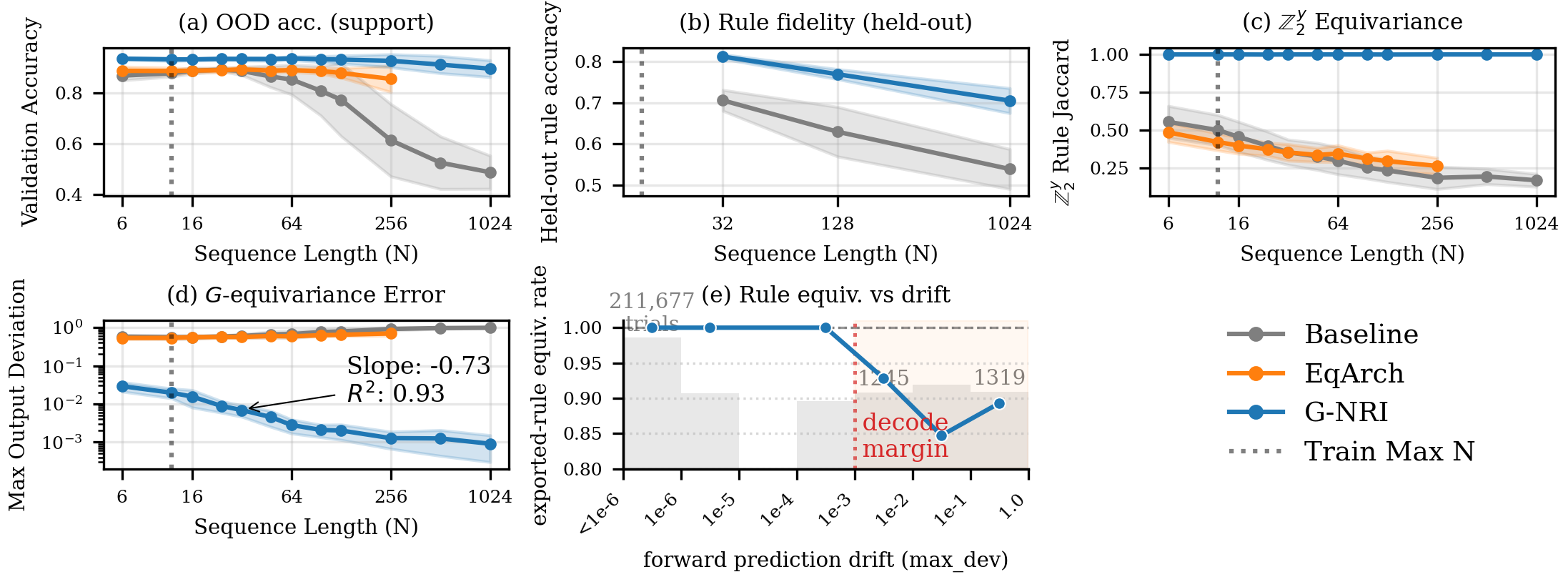}
\caption{\textbf{Variable-schema scaling and drift stratification.}}
\label{fig:results}
\label{fig:scaling}
\label{fig:drift}
\end{figure}

\begin{table}[t]
\centering
\footnotesize
\caption{Per-dataset accuracy (\%, $n{=}8$ seeds): DT = trained decision tree; base/EqA/G = NRI baseline/EqArch/G-NRI. Bold marks the best zero-shot variant.}
\label{tab:capability}
\begin{tabular}{@{}lrrrrr@{\hskip 0.7em}lrrrrr@{}}
\toprule
Dataset & $N$ & DT$^\dagger$ & base & EqA & \textbf{G} & Dataset & $N$ & DT$^\dagger$ & base & EqA & \textbf{G} \\
\midrule
adult & 105 & 81.5 & \textbf{65.6} & 57.7 & 64.4 & nursery & 27 & 98.7 & 68.4 & 73.1 & \textbf{75.9} \\
breast-cancer & 9 & 93.7 & \textbf{92.7} & 92.4 & 92.0 & spambase & 57 & 90.7 & 71.0 & 67.4 & \textbf{79.0} \\
car & 21 & 96.7 & 30.4 & 24.8 & \textbf{73.8} & tic-tac-toe & 27 & 93.1 & 60.1 & 58.6 & \textbf{69.9} \\
credit & 46 & 80.6 & 70.7 & 68.6 & \textbf{80.7} & vote & 32 & 94.5 & 91.3 & 91.6 & \textbf{94.3} \\
diabetes & 8 & 70.0 & 72.0 & \textbf{72.6} & 71.8 & monks-1 & 17 & 98.4 & 74.1 & 74.1 & \textbf{74.6} \\
german & 61 & 65.3 & 58.4 & 58.9 & \textbf{60.9} & monks-2 & 17 & 98.5 & \textbf{61.1} & 60.4 & 55.9 \\
hepatitis & 32 & 77.7 & \textbf{80.6} & 76.0 & 80.0 & monks-3 & 17 & 96.8 & 90.3 & \textbf{96.4} & \textbf{96.4} \\
ionosphere & 34 & 79.4 & 71.9 & 68.4 & \textbf{73.1} & mutag & 51 & 84.6 & 63.9 & 66.9 & \textbf{73.1} \\
kr-vs-kp & 73 & 99.6 & 66.8 & 58.3 & \textbf{69.9} & clevr-hans3 & 105 & 100.0 & 66.4 & 65.4 & \textbf{81.5} \\
mushroom & 116 & 100.0 & 78.0 & 72.4 & \textbf{81.0} & \emph{mean (19)} & & 89.5 & 70.2 & 68.6 & \textbf{76.2} \\
\bottomrule
\addlinespace[2pt]
\multicolumn{12}{@{}l@{}}{\footnotesize $^\dagger$\,trained on every dataset, so not directly comparable to the zero-shot variants.}\\
\end{tabular}

\end{table}

\subsection{RQ3: Does restoring symmetry improve real zero-shot transfer while keeping rules compact?}
\label{sec:results-uci}

We next test the same frozen checkpoints, with no per-dataset training, on $19$ real datasets of up to $116$ atoms. Per-dataset-trained anchors give a reference point. G-NRI stays below them, as expected for a zero-shot model, but closes much of the gap on the larger schemas. The decisive comparison is \emph{among the zero-shot variants themselves}, where the symmetry components are the only difference.
\looseness=-1
Among these zero-shot variants, G-NRI raises mean accuracy over the baseline and wins most datasets under the one-sided sign test (Table~\ref{tab:capability}), with the lift growing with schema size (Table~\ref{tab:real-summary}a). \emph{Within this architecture, restoring symmetry improves the frozen NRI.} We read this as evidence about the effect of symmetry on one inducer, not as a claim that G-NRI is a generally competitive rule learner. EqArch sits below the baseline on the mean, so the full construction, not the architectural fixes alone, carries the gain. Against per-dataset-trained CART and RIPPER references, G-NRI induces the sparsest DNFs (Table~\ref{tab:real-summary}b).

\begin{table}[t]
\centering
\caption{RQ3 evidence. pp = percentage points, Seed Jac. = cross-seed Jaccard, Clause F1 = clause-recovery F1 score.}
\label{tab:real-summary}
\footnotesize
\begin{minipage}[t]{0.50\linewidth}
\centering
\textbf{(a) Lift by schema size}\\[2pt]
\begin{tabular}{lrr}
\toprule
Schema size $N$ & \# datasets & $\Delta$ acc (pp) \\
\midrule
2--10 & 2 & $-0.40$ \\
11--30 & 5 & $+3.75$ \\
31--100 & 8 & $+4.54$ \\
101--1000 & 3 & $+5.67$ \\
\bottomrule
\end{tabular}

\end{minipage}\hfill
\begin{minipage}[t]{0.48\linewidth}
\centering
\textbf{(b) Rule quality}\\[2pt]
\begin{tabular}{@{}lrrr@{}}
\toprule
Metric & \textbf{G-NRI} & CART & RIPPER \\
\midrule
Acc. (\%) & 77.0 & 85.8 & 85.4 \\
Clauses & 2.04 & 8.01 & 4.56 \\
Literals & 5.96 & 37.58 & 16.40 \\
Seed Jac. & 0.484 & 0.804 & 0.566 \\
Clause F1 & 0.356 & 0.420 & 0.865 \\
\bottomrule
\end{tabular}

\end{minipage}
\end{table}
The exported rules read by inspection. On \emph{monks-3} G-NRI returns the ground-truth clause $\lnot\,\texttt{body\_shape\_octagon} \land \lnot\,\texttt{jacket\_blue}$ identically across all $8$ seeds.

\section{Related Work}
\label{sec:related}

\paragraph{Symbolic and differentiable rule learning.}
\looseness=-1
Inductive Logic Programming~\citep{Muggleton1991ILP,Cropper2021Popper}, answer-set learners (ILASP)~\citep{Law2014ILASP}, and LFIT~\citep{Inoue2014LFIT} learn interpretable hypotheses by symbolic search. They respect logic's symmetries but search per task and offer no schema transfer. Differentiable and neuro-symbolic methods~\citep{Evans2018DILP,Yang2017NeuralLP,Manhaeve2018DeepProbLog,Cunnington2023FFNSL} are noise-tolerant but also train one model per task. We instead apply one foundation model zero-shot, calibrated by per-dataset-trained anchors.

\emph{Foundation models and symmetry by construction.} \looseness=-1 The foundation-model inducers we build on, $\delta$LFIT2~\citep{Phua2024VAINN} and the NRI~\citep{Phua2026NRI} (Section~\ref{sec:intro}), recover a rule for each episode in one pass, an instance of amortised in-context inference~\citep{Garg2022InContext,LakeBaroni2023Meta}.
Group-equivariant and permutation-invariant networks encode symmetry in the architecture~\citep{Cohen2016GroupEquivariant,Zaheer2017DeepSets,Maron2019InvariantNetworks}, including the signed permutations~\citep{Agrawal2023Signed} of $\groupBN$, but act on continuous representations. We instead enforce symmetry on a \emph{discrete} symbolic output in a schema-OOD regime~\citep{LakeBaroni2018SCAN,Stammer2021CLEVRHans}, and apply the canonicalisation and frame-averaging view of equivariance~\citep{Kaba2023Canonicalization,Puny2022FrameAveraging,Lim2023SignNet} to rules rather than embeddings.

\section{Discussion, Limitations, and Conclusion}
\label{sec:discussion}

\paragraph{Limitations and scope.}
\label{sec:results-dissociation}%
\looseness=-1
Symmetry sustains support-set accuracy beyond training, and near-equivariant scores export exact $\groupG$-equivariant rules without retraining. Commutation differs from accuracy, and rule fidelity trails support-set accuracy and per-task learners, worst on multi-class \emph{car}. Raw $\groupBN$ can break encoding and rule equivariance, exact on encoding-preserving \emph{mushroom}, a symmetry MUTAG lacks. Multi-clause recovery, not atom count $N$, bottlenecks scaling, and \emph{monks-2}'s two-of-six target lacks small DNFs. Untuned Booleanization (label-free medians, one-hot categorical encoding, one-vs-rest targets) limits generality. Export validation covers NRI only, although any $\groupG$-equivariant literal scorer qualifies, so other compatible architectures and symbolic baselines await testing.

\clearpage

\acks{This work was supported by JSPS KAKENHI Grant Numbers 25K21269 and 25K03190, and by the NII Open Collaborative Research Fund 262S08-24672.}

\bibliography{paper}

\section*{Reproducibility Statement}
The project distributes its code at \url{https://github.com/phuayj/g-nri}. The repository contains the code, the synthetic data generator, and the scripts that build every table and figure, which is what a reader needs to rerun the experiments end to end. It carries no pre-trained checkpoints, so a reproduction run starts from pre-training. Appendices~\ref{app:preprocessing} and~\ref{app:uci-per-dataset} document the dataset preprocessing and the full per-dataset results.

\appendix

\noindent
This appendix supports the body and introduces no new headline claims. It contains the proofs of every formal statement, the canonical-export algorithm, the full statistical and per-dataset tables behind the body's summarised results, and the schema-relative analysis referenced in Section~\ref{sec:results-ruleeq}.
\section{Proofs of the equivariance results}
\label{app:proofs}

We prove the score-map equivariance Proposition~\ref{prop:bn} (including the tie-complete selection it relies on), then verify the label-swap prediction identity of Eq.~\ref{eq:tta}. The rule-level Theorem~\ref{thm:quotient} and the margin argument are in Appendix~\ref{app:proof-quotient}.
Throughout, $\Rpred:(X, Y)\mapsto[0,1]^M$ denotes the trained NRI's continuous prediction (positive rail), and we treat $X \in \{0,1\}^{M\times N}$, $Y \in \{0,1\}^M$ as binary tensors with valid masking suppressed for clarity.

\paragraph{Proposition~\ref{prop:bn} ($\groupBN$ equivariance of the soft literal score map).}
\begin{proof}
Let $g = (\pi, \sigma) \in \groupBN$ act on $X$ as $(g\cdot X)[:, j] = X[:, \pi^{-1}(j)] \oplus \sigma_{\pi^{-1}(j)}$.
The encoder computes per-literal features $\phi_j(X, Y) \in \mathbb{R}^{D}$ by applying a fixed function $\phi$ to the $j$-th column of $X$ paired with $Y$. This function does not depend on the index $j$, so under $\pi$ the $\phi_j$ permute correspondingly.
Under $\sigma$, $\phi$ re-derives the truth-rate, marginal, entropy, and co-occurrence statistics from $X[:, j] \oplus \sigma_j$. These are symmetric in the positive/negative-literal swap, so $\phi_j(\sigma_j\cdot X, Y) = \mathrm{swap}_{\sigma_j}(\phi_j(X, Y))$, which the shared multilayer perceptron (MLP) and the constant-zero $\mathrm{literal\_sign}$ feature commute with.
The cross-attention keys, with example-content keys disabled, are masked example-axis aggregations independent of atom index, hence $\groupSN$-equivariant and $\groupZTwoN$-invariant.
The pair memory scores literal pairs by a content function and selects them with a tie-complete rule, admitting every pair whose score ties the $k$-th best. This consults only the multiset of scores, never an atom index, so it commutes with $\pi$; a fixed-cardinality $\mathrm{top}\text{-}k$ would instead break a boundary tie by index and fail to commute. The polarity average of each selected pair's embedding is invariant under $\sigma$.
Composing these stages, the soft literal scores $(p^{+}, p^{-})$ are $\groupBN$-equivariant in exact arithmetic.
The map is also $\groupSM$-invariant by construction: every example-axis operation is a masked symmetric pool, and the per-literal statistics are counts and rates over examples, so permuting the $M$ examples leaves $(p^{+}, p^{-})$ unchanged.
\end{proof}

\paragraph{Native-forward residual and the deployed fix.}
At initialisation the \emph{native} forward (fixed-cardinality pair-memory $\mathrm{top}\text{-}k$, $k{=}8$) matches this exactness, with $\groupBN$ drift $\sim 4{\times}10^{-8}$, because random pair scores are generically tie-free. Training, however, drives each atom and its negation to \emph{exactly} equal pair scores (self-pairs), so the native $\mathrm{top}\text{-}k$ now faces a tie and breaks it by atom index. An atom permutation then selects a different pair set and shifts $\Rpred$. A decisive ablation isolates this as the \emph{only} non-equivariant operation. Disabling pair memory collapses the worst-case $\groupSN$ drift from $0.079$ to $4.8{\times}10^{-7}$ at $N{=}128$. The residual is structural, not a precision artefact (float64 reproduces the per-seed worst cases $0.008$, $0.031$, $0.117$ and the average drift to six figures). Replacing the native $\mathrm{top}\text{-}k$ by the tie-complete selector and averaging each pair embedding over its polarity flips (both eval-time, no retraining) restores Proposition~\ref{prop:bn}. Over $3$ seeds at $N{=}128$ the worst-case $\groupSN / \groupZTwoN / \groupBN / g_{\mathrm{combined}}$ drift drops from $0.119 / 0.203 / 0.143 / 0.145$ to $4{\times}10^{-4} / 0.002 / 0.001 / 0.001$ (mean $\sim 10^{-6}$, the numerical-precision floor), at a cost of $-0.75$ percentage points (pp) in validation accuracy. The example-order $\groupSM$ and label-swap $\groupZTwoY$ stay exact ($\sim 3{\times}10^{-8}$). At this score equivariance the canonical export is exactly $\groupG$-equivariant (Theorem~\ref{thm:quotient}). The non-tie-complete dual-rail export at the same forward still flips under $\groupBN$ ($\mathrm{rule\_eq} \approx 0.35$ versus $1.000$ for the canonical export), so both the tie-complete forward and the canonical decode are necessary.

\paragraph{Label-swap prediction identity (Eq.~\ref{eq:tta}).}
\begin{proof}
Substitute $Y\mapsto 1{-}Y$ in Eq.~\ref{eq:tta}:
\begin{align*}
\Ravg(X, 1{-}Y) &= \tfrac{1}{2}\big(\Rpred(X, 1{-}Y) + 1 - \Rpred(X, 1-(1{-}Y))\big) \\
&= \tfrac{1}{2}\big(\Rpred(X, 1{-}Y) + 1 - \Rpred(X, Y)\big) \\
&= 1 - \tfrac{1}{2}\big(\Rpred(X, Y) + 1 - \Rpred(X, 1{-}Y)\big) \\
&= 1 - \Ravg(X, Y).
\end{align*}
The equality is exact in real arithmetic. In float32 the residual is bounded by the rounding error of the two intermediate sums, which we measure as $\sim 3{\times}10^{-8}$ at $N{=}128$.
\end{proof}

\section{Canonical export and proof of Theorem~\ref{thm:quotient}}
\label{app:proof-quotient}

\paragraph{The exporter.}
Algorithm~\ref{alg:quotient} states the per-rail canonical decoder of Section~\ref{sec:recipe-quotient}. It admits atoms in whole tie-buckets by presence score, assigns polarity by the signed contrast, and removes duplicate and contradictory clauses. The dual-rail selector of Section~\ref{sec:recipe-eval} then picks the positive-rail rule or the De~Morgan complement of the negative rail by a label-swap-invariant comparator. With $\Rpred^{\pm}$ the two rails' predictions, it ranks the rails by their decoded fit to the symmetrised reference $\Rpred^{\mathrm{sym}}=\tfrac12(\Rpred^{+}+1-\Rpred^{-})$ and breaks ties by the scalar $h=\tfrac1M\sum_i(\Rpred^{+}-\Rpred^{-})_i$, which flips sign under a label swap. It abstains on an exact rail tie.

\begin{algorithm}[h]
\small
\caption{Canonical decode (one rail).}
\label{alg:quotient}
\KwIn{Literal probabilities $p\in[0,1]^{K\times 2N}$ over $K$ slots on the clean $[p^{+}|p^{-}]$ axis; clause-activation gates $w\in[0,1]^K$; budget $b$; tolerance $\mathrm{tie\_eps}$.}
\KwOut{DNF rule $\hat R$ as a set of clauses of signed literals.}
$\hat R \leftarrow \emptyset$\;
\For{slot $k$ with $w_k\ge\tfrac12$}{
  $s_j \leftarrow \max(p_{k,j}, p_{k,j+N})$, \quad $d_j \leftarrow p_{k,j}-p_{k,j+N}$ \quad for $j=1,\dots,N$\;
  $\mathcal{C} \leftarrow \{\, j : s_j \ge \tfrac12 \text{ and } |d_j|>0 \,\}$\;
  sort $\mathcal{C}$ by $s_j$ descending; group into buckets tied within $\mathrm{tie\_eps}$\;
  $\mathcal{S}\leftarrow\emptyset$; \For{bucket $B$ in order}{\lIf{$|\mathcal{S}|+|B|\le b$}{$\mathcal{S}\leftarrow\mathcal{S}\cup B$}\lElse{break}}
  clause $\leftarrow \{\, (j,\ \mathbf{1}[d_j>0]) : j\in\mathcal{S}\,\}$ sorted by $(j,\ \text{polarity})$\;
  \lIf{clause $\neq\emptyset$}{$\hat R \leftarrow \hat R \cup \{\text{clause}\}$}
}
\Return $\hat R$ with duplicate and contradictory clauses removed\;
\end{algorithm}

\paragraph{Theorem~\ref{thm:quotient} (exact rule-level $\groupG$-equivariance).}
Assume the score map $\Phi:(X,Y)\mapsto(p^{+},p^{-})$ is $\groupG$-equivariant in the stated sense.

\begin{proof}
The canonical export factors as $\Phi$ followed by the deterministic decode of Algorithm~\ref{alg:quotient} and the dual-rail selection. We check that each generator of $\groupG=\groupSM\times(\groupSN\ltimes\groupZTwoN)\times\groupZTwoY$ commutes with this composition.

\emph{$\groupSM$.} $\Phi$ is invariant under example permutations by hypothesis, so $(s,d)$ and $\hat R$ are unchanged, which matches the trivial action of $\groupSM$ on rule space.

\emph{$\groupSN$.} Under an atom permutation $\pi$, $\Phi$ permutes the score pairs, so $s_j$ and $d_j$ permute by $\pi$. The presence test $s_j\ge\tfrac12$ and polarity $\mathbf{1}[d_j>0]$ are pointwise and the tie-bucket selection depends only on the multiset of presence scores, taking whole buckets or none. It therefore commutes with $\pi$ and never consults an atom index. Canonical re-sorting by atom index then yields the same clauses re-indexed by $\pi$, i.e.\ $\hat R(\pi\cdot X)=\pi\cdot\hat R(X)$.

\emph{$\groupZTwoN$.} A polarity flip at atom $j$ swaps $(p^{+}_j,p^{-}_j)$, so $s_j=\max$ is unchanged while $d_j\mapsto-d_j$ flips the polarity bit, exchanging $x_j$ and $\neg x_j$. An exact tie $d_j{=}0$ maps to $d_j{=}0$, so the atom is omitted in both frames (the inclusion test requires $d_j\neq0$), consistently. Unflipped atoms are untouched, so $\hat R(\sigma\cdot X)=\sigma\cdot\hat R(X)$. With the previous case this gives $\groupBN=\groupSN\ltimes\groupZTwoN$ equivariance.

\emph{$\groupZTwoY$.} A label swap exchanges the two rails. The reference $\Rpred^{\mathrm{sym}}=\tfrac12(\Rpred^{+}+(1-\Rpred^{-}))$ maps to its complement and $h=\tfrac1M\sum_i(\Rpred^{+} - \Rpred^{-})_i$ negates. Hence the comparator's leading keys reverse and the selector returns the opposite rail with the flipped polarity flag, which denotes the De~Morgan complement. An exact tie in the keys is preserved under the swap, so the selector abstains symmetrically. Hence $\hat R(X,1{-}Y)\equiv\overline{\hat R(X,Y)}$.

The factors act on independent coordinates of the rule (literal identity, literal polarity, and global complement), so the canonical export commutes with every $g\in\groupG$, exactly up to logical equivalence.
\end{proof}

\paragraph{Margin stability.}
The discrete decode depends on the scores $\Phi$ only through finitely many decisions: the comparisons $s_j\gtrless\tfrac12$, the signs of $d_j$, the tie-bucket assignments (gaps in the sorted presence scores relative to $\mathrm{tie\_eps}$), and the sign of the rail-selection key.
A deviation $\delta=\sup|\Phi(g\cdot\cdot)-g\cdot\Phi(\cdot)|$ moves each presence score $s_j=\max(p^{+}_j,p^{-}_j)$ by at most $\delta$ and each contrast $d_j=p^{+}_j-p^{-}_j$ by at most $2\delta$.
If $2\delta$ is below the least gap between any score and the nearest decision boundary, which we call the \emph{decode margin}, every comparison, sign, and bucket assignment agrees between $\Phi(g\cdot(X,Y))$ and $g\cdot\Phi(X,Y)$. The two then decode to the same rule and the exact-arithmetic argument of Theorem~\ref{thm:quotient} applies, so a small enough residual yields an exactly equivariant rule.

The decoder is exactly equivariant by construction. The only path to a violation is a continuous drift in $\Phi$ that crosses a decode boundary. This is why the residual $\groupSN$ non-equivariance of Proposition~\ref{prop:bn} is the sole observed failure mode.

\paragraph{Abstention and tie incidence.}
In the deployed checkpoints the conservative decode surfaces as \emph{empty} exports rather than tie abstentions. The coverage column of Table~\ref{tab:ruleeq} reports the nonempty fraction: high on synthetic episodes, complete on the exact real-data row, and lower on the raw off-manifold failures. Rule equivalence restricted to nonempty pairs (rule\_eq$_{\mathrm{ne}}$) is unchanged, so abstention does not inflate the equivariance results.
The tie-complete pair selection admits the whole boundary tie bucket, which in the trained checkpoints is the set of $N$ self-pairs, so the selected set grows linearly with $N$. Its cost at $N{=}1024$ is included in the runtime profile of Table~\ref{tab:runtime_memory}.

\section{Statistical Analysis}
\label{app:stats}

We fixed eight statistical hypothesis families in an internal analysis plan before running any audit experiments. All primary tests are one-sided paired Wilcoxon signed-rank ($\text{G-NRI} > \text{baseline}$). Table~\ref{tab:full_headline} reports the full per-$N$ paired statistics for each primary metric. The one-sided Wilcoxon $p$-value floor for an $n{=}8$ paired test with all eight signs in the same direction is $1/256 \approx 0.0039$. Holm-Bonferroni correction~\citep{Holm1979Bonferroni} across the twelve-$N$ family raises the floor to $0.0469$ (hence the repeated $0.047$ entries). We report bootstrap percentile $p$-values that round to zero as $<10^{-4}$ ($10{,}000$ resamples). The table reports the group means, the bootstrap $95\%$ confidence intervals on the seed-paired differences, paired effect sizes, and Holm-corrected $p$-values. Figure~\ref{fig:scaling} visualises the corresponding scaling behaviour. Each invariance profile averages $\maxdev$ and $\jaccard$ over $E{=}100$ episodes and $T{=}20$ samples per transform. The eight pre-registered families also cover Brown-Forsythe equality-of-variance tests~\citep{BrownForsythe1974} contrasting the baseline and G-NRI per-seed dispersion.

\begin{table}[htbp]
\centering
\caption{Full primary statistics across all twelve schema sizes, $n{=}8$ seeds, paired Wilcoxon signed-rank with Holm-Bonferroni correction.}
\label{tab:full_headline}
\tiny
\begin{tabular}{llrrrr}
\toprule
Metric & $N$ & Baseline & G-NRI & Paired diff 95\% CI & $d$ / Holm $p$ \\
\midrule
val\_acc & 6 & 0.867$\pm$0.019 & 0.934$\pm$0.004 & 0.067 [0.055, 0.079] & 3.652 / 0.047* \\
 & 12 & 0.876$\pm$0.013 & 0.931$\pm$0.006 & 0.055 [0.046, 0.062] & 4.145 / 0.047* \\
 & 16 & 0.886$\pm$0.012 & 0.931$\pm$0.004 & 0.044 [0.038, 0.051] & 4.255 / 0.047* \\
 & 24 & 0.890$\pm$0.006 & 0.933$\pm$0.006 & 0.043 [0.037, 0.048] & 5.157 / 0.047* \\
 & 32 & 0.885$\pm$0.016 & 0.933$\pm$0.006 & 0.047 [0.039, 0.055] & 3.578 / 0.047* \\
 & 48 & 0.863$\pm$0.042 & 0.932$\pm$0.007 & 0.069 [0.047, 0.095] & 1.799 / 0.047* \\
 & 64 & 0.852$\pm$0.059 & 0.935$\pm$0.007 & 0.083 [0.050, 0.123] & 1.463 / 0.047* \\
 & 96 & 0.808$\pm$0.097 & 0.932$\pm$0.012 & 0.124 [0.070, 0.190] & 1.318 / 0.047* \\
 & 128 & 0.772$\pm$0.144 & 0.930$\pm$0.015 & 0.158 [0.079, 0.255] & 1.146 / 0.047* \\
 & 256 & 0.613$\pm$0.142 & 0.925$\pm$0.026 & 0.313 [0.220, 0.402] & 2.167 / 0.047* \\
 & 512 & 0.524$\pm$0.104 & 0.910$\pm$0.034 & 0.386 [0.297, 0.455] & 3.084 / 0.047* \\
 & 1024 & 0.487$\pm$0.065 & 0.894$\pm$0.033 & 0.408 [0.350, 0.460] & 4.768 / 0.047* \\
\addlinespace
$Z_2^y$ Jaccard & 6 & 0.556$\pm$0.105 & 1.000$\pm$0.000 & 0.444 [0.372, 0.506] & 4.236 / 0.047* \\
 & 12 & 0.502$\pm$0.099 & 1.000$\pm$0.000 & 0.498 [0.432, 0.559] & 5.017 / 0.047* \\
 & 16 & 0.458$\pm$0.099 & 1.000$\pm$0.000 & 0.542 [0.477, 0.604] & 5.492 / 0.047* \\
 & 24 & 0.399$\pm$0.093 & 0.999$\pm$0.002 & 0.601 [0.538, 0.658] & 6.419 / 0.047* \\
 & 32 & 0.356$\pm$0.083 & 1.000$\pm$0.000 & 0.644 [0.586, 0.693] & 7.748 / 0.047* \\
 & 48 & 0.329$\pm$0.089 & 1.000$\pm$0.000 & 0.671 [0.609, 0.723] & 7.555 / 0.047* \\
 & 64 & 0.300$\pm$0.089 & 1.000$\pm$0.000 & 0.700 [0.635, 0.751] & 7.832 / 0.047* \\
 & 96 & 0.255$\pm$0.072 & 1.000$\pm$0.000 & 0.745 [0.694, 0.788] & 10.314 / 0.047* \\
 & 128 & 0.236$\pm$0.076 & 0.999$\pm$0.002 & 0.763 [0.709, 0.809] & 10.065 / 0.047* \\
 & 256 & 0.188$\pm$0.074 & 1.000$\pm$0.000 & 0.812 [0.759, 0.855] & 10.994 / 0.047* \\
 & 512 & 0.197$\pm$0.051 & 1.000$\pm$0.000 & 0.803 [0.770, 0.834] & 15.875 / 0.047* \\
 & 1024 & 0.171$\pm$0.042 & 1.000$\pm$0.000 & 0.829 [0.802, 0.855] & 19.671 / 0.047* \\
\addlinespace
$B_N$ max\_dev & 6 & 0.082$\pm$0.043 & 0.028$\pm$0.008 & -0.054 [-0.084, -0.031] & -1.290 / 0.047* \\
 & 12 & 0.109$\pm$0.056 & 0.019$\pm$0.006 & -0.090 [-0.130, -0.058] & -1.584 / 0.047* \\
 & 16 & 0.117$\pm$0.067 & 0.016$\pm$0.007 & -0.101 [-0.148, -0.069] & -1.576 / 0.047* \\
 & 24 & 0.138$\pm$0.064 & 0.009$\pm$0.003 & -0.129 [-0.176, -0.095] & -2.008 / 0.047* \\
 & 32 & 0.155$\pm$0.079 & 0.007$\pm$0.002 & -0.148 [-0.205, -0.109] & -1.891 / 0.047* \\
 & 48 & 0.186$\pm$0.104 & 0.005$\pm$0.002 & -0.181 [-0.257, -0.128] & -1.745 / 0.047* \\
 & 64 & 0.186$\pm$0.091 & 0.003$\pm$0.001 & -0.184 [-0.248, -0.136] & -2.022 / 0.047* \\
 & 96 & 0.202$\pm$0.115 & 0.002$\pm$0.001 & -0.200 [-0.284, -0.140] & -1.740 / 0.047* \\
 & 128 & 0.193$\pm$0.096 & 0.002$\pm$0.001 & -0.190 [-0.260, -0.140] & -1.984 / 0.047* \\
 & 256 & 0.164$\pm$0.065 & 0.001$\pm$0.001 & -0.163 [-0.204, -0.121] & -2.495 / 0.047* \\
 & 512 & 0.115$\pm$0.094 & 0.001$\pm$0.001 & -0.114 [-0.177, -0.056] & -1.212 / 0.047* \\
 & 1024 & 0.088$\pm$0.094 & 0.001$\pm$0.001 & -0.087 [-0.151, -0.031] & -0.924 / 0.047* \\
\addlinespace
$g$ max\_dev & 6 & 0.576$\pm$0.061 & 0.029$\pm$0.008 & -0.547 [-0.586, -0.509] & -9.028 / 0.047* \\
 & 12 & 0.561$\pm$0.058 & 0.020$\pm$0.006 & -0.541 [-0.575, -0.502] & -9.506 / 0.047* \\
 & 16 & 0.560$\pm$0.057 & 0.016$\pm$0.007 & -0.545 [-0.579, -0.508] & -9.834 / 0.047* \\
 & 24 & 0.585$\pm$0.038 & 0.009$\pm$0.003 & -0.576 [-0.604, -0.556] & -15.442 / 0.047* \\
 & 32 & 0.603$\pm$0.071 & 0.007$\pm$0.002 & -0.596 [-0.646, -0.555] & -8.441 / 0.047* \\
 & 48 & 0.660$\pm$0.118 & 0.005$\pm$0.002 & -0.655 [-0.737, -0.586] & -5.549 / 0.047* \\
 & 64 & 0.674$\pm$0.145 & 0.003$\pm$0.001 & -0.671 [-0.767, -0.580] & -4.641 / 0.047* \\
 & 96 & 0.760$\pm$0.134 & 0.002$\pm$0.001 & -0.758 [-0.845, -0.673] & -5.678 / 0.047* \\
 & 128 & 0.789$\pm$0.154 & 0.002$\pm$0.001 & -0.787 [-0.887, -0.686] & -5.115 / 0.047* \\
 & 256 & 0.918$\pm$0.078 & 0.001$\pm$0.001 & -0.917 [-0.963, -0.865] & -11.828 / 0.047* \\
 & 512 & 0.968$\pm$0.042 & 0.001$\pm$0.001 & -0.967 [-0.989, -0.937] & -23.254 / 0.047* \\
 & 1024 & 0.985$\pm$0.018 & 0.001$\pm$0.001 & -0.984 [-0.995, -0.972] & -53.869 / 0.047* \\
\bottomrule
\end{tabular}

\end{table}

\paragraph{Real-data unit of analysis.}
On the $19$ real datasets the independent unit is the dataset, not the (seed, dataset) cell, because the eight seeds on a dataset share its folds and examples and are correlated. The primary test is a one-sided exact-binomial sign test on the per-dataset win count ($14/19$, $p{=}0.032$; two-sided $0.064$), corroborated by a dataset-level one-sided paired Wilcoxon signed-rank test over the $19$ per-dataset mean differences ($p{=}0.0014$). Excluding the \emph{car} artefact, these become $13/18$ ($p{=}0.048$) and $p{=}0.0028$. The per-dataset differences are right-skewed by \emph{car}, so the Wilcoxon symmetry assumption is only approximate and we lead with the sign test. The Wilcoxon pooling all $152$ seed$\times$dataset cells ($p{=}1.9{\times}10^{-5}$) is descriptive only. It treats correlated seeds as independent and is not a valid test. The subgroup splits (UCI-14, $n{=}14$; non-tabular-5, $n{=}5$; sign-test $p{=}0.090$ and $p{=}0.19$) are under-powered and uncorrected for multiplicity, so they are not independent confirmation.

\paragraph{Real-data lift by schema size.}
Table~\ref{tab:real-summary}a (body) bins the per-dataset G-NRI-minus-baseline accuracy lift by schema size $N$. The lift is negligible on the smallest schemas, which sit inside the training range, and increases monotonically with $N$. This supports the claim that the gain from restoring symmetry grows with schema size.

\paragraph{Real-data symmetry stress.}
Table~\ref{tab:realdata-stress} reports the real-data prediction-level symmetry-stress suite of Section~\ref{sec:protocol}, the mean $|\Delta\valacc|$ under random atom permutations and polarity flips on $5$ datasets. It backs the drift criterion of Section~\ref{sec:results-ruleeq}.

\begin{table}[htbp]
\centering
\caption{Real-data symmetry stress on $5$ datasets, mean $|\Delta\valacc|$ in percentage points (pp) across $n{=}125$ trials; ratio (b/G) is the baseline-to-G-NRI drift ratio.}
\label{tab:realdata-stress}
\small
\begin{tabular}{lc@{\hskip 0.6em}rrr@{\hskip 0.6em}c}
\toprule
Transform & $n$ & baseline & EqArch & G-NRI & ratio (b/G) \\
\midrule
$\groupSN$ permutation & 125 & 1.16 & 0.00 & 0.02 & $64{\times}$ \\
$\groupZTwoN$ polarity flip & 125 & 6.71 & 3.35 & 3.04 & $2.2{\times}$ \\
\bottomrule
\end{tabular}

\end{table}

\section{Real-data preprocessing}
\label{app:preprocessing}

We convert each real dataset once to a fixed Boolean schema, shared across all folds.
We binarise \emph{numeric features} at the per-feature median, $x \mapsto \mathbf{1}[x > \mathrm{median}]$, for all $19$ datasets (the label-free rule of Section~\ref{sec:protocol}).
We keep \emph{already-binary features} as is and one-hot encode \emph{categorical features} (the MONK's nominal attributes become binary indicators).
MUTAG is the standard propositionalisation to $51$ binary chemistry and graph-structure features. CLEVR-Hans3 uses $105$ existential atoms over object attributes (shape, size, colour, material).
\emph{Missing values} stay masked through binarisation (the threshold uses the non-missing median) and in the episode.
\emph{Multi-class} datasets (\emph{car}, \emph{nursery}, \emph{clevr-hans3}) use a One-vs-Rest decomposition. We form one binary episode per class, with the multi-class prediction taken as the arg-max over per-class rule scores.
We compute the median thresholds on the full dataset rather than per training fold. Being label-free, this does not bias the zero-shot G-NRI-versus-baseline comparison (Section~\ref{sec:protocol}). The per-dataset-trained anchors are calibration references rather than claims, and their absolute accuracies could shift slightly under per-fold thresholds. We flag a strict per-fold binarisation as a confirmatory check (raw continuous features are not retained in the released artefact, so this requires rebuilding the datasets).

\ifjournalversion\else
\section{Real-data per-dataset results and exported-rule quality}
\label{app:uci-per-dataset}

Table~\ref{tab:realdata-perds} reports per-dataset mean accuracy for the three zero-shot NRI variants on all $19$ real datasets, alongside the three per-dataset-trained supervised anchors that calibrate the target-specific gap.
G-NRI's losses to the baseline are marginal (around a point) except \emph{monks-2}, the largest negative outlier, where it falls below both the baseline and the majority anchor. The trained decision tree solves it, while logistic regression also falls below the majority anchor. The loss is consistent with the multi-clause bottleneck of Section~\ref{sec:results-dissociation}, though we have not verified this causally. Separately, on \emph{adult} all three zero-shot variants sit below even the majority anchor. The model collapses toward the minority class and the exported rule defaults to the majority.

\begin{table}[htbp]
\centering
\caption{Per-dataset mean accuracy (\%, $n{=}8$ seeds): trained anchors and zero-shot NRI variants. $C$ is class count; log. reg. is logistic regression.}
\label{tab:realdata-perds}
\scriptsize
\begin{tabular}{lcc@{\hskip 0.4em}rrr@{\hskip 0.5em}rrr}
\toprule
Dataset & $N$ & C & majority & log.\ reg. & tree & baseline & EqArch & G-NRI \\
\midrule
\multicolumn{9}{l}{\emph{14 UCI tabular benchmarks}} \\
adult & 105 & 2 & 76.1 & 84.6 & 81.5 & \textbf{65.6} & 57.7 & 64.4 \\
breast-cancer & 9 & 2 & 65.5 & 95.3 & 93.7 & \textbf{92.7} & 92.4 & 92.0 \\
car & 21 & 4 & 70.0 & 87.8 & 96.7 & 30.4 & 24.8 & \textbf{73.8} \\
credit & 46 & 2 & 55.5 & 85.8 & 80.6 & 70.7 & 68.6 & \textbf{80.7} \\
diabetes & 8 & 2 & 65.1 & 72.7 & 70.0 & 72.0 & \textbf{72.6} & 71.8 \\
german & 61 & 2 & 70.0 & 74.3 & 65.3 & 58.4 & 58.9 & \textbf{60.9} \\
hepatitis & 32 & 2 & 79.3 & 83.9 & 77.7 & \textbf{80.6} & 76.0 & 80.0 \\
ionosphere & 34 & 2 & 64.1 & 83.5 & 79.4 & 71.9 & 68.4 & \textbf{73.1} \\
kr-vs-kp & 73 & 2 & 52.2 & 96.5 & 99.6 & 66.8 & 58.3 & \textbf{69.9} \\
mushroom & 116 & 2 & 51.8 & 100.0 & 100.0 & 78.0 & 72.4 & \textbf{81.0} \\
nursery & 27 & 5 & 33.3 & 91.6 & 98.7 & 68.4 & 73.1 & \textbf{75.9} \\
spambase & 57 & 2 & 60.6 & 93.9 & 90.7 & 71.0 & 67.4 & \textbf{79.0} \\
tic-tac-toe & 27 & 2 & 65.3 & 98.3 & 93.1 & 60.1 & 58.6 & \textbf{69.9} \\
vote & 32 & 2 & 61.4 & 95.6 & 94.5 & 91.3 & 91.6 & \textbf{94.3} \\
\addlinespace
mean over 14 UCI & --- & --- & 62.2 & 88.8 & 87.3 & 69.9 & 67.2 & \textbf{76.2} \\
\addlinespace
\multicolumn{9}{l}{\emph{Non-tabular benchmarks (MONKS / MUTAG / CLEVR-Hans3)}} \\
monks-1 & 17 & 2 & 49.6 & 74.6 & 98.4 & 74.1 & 74.1 & \textbf{74.6} \\
monks-2 & 17 & 2 & 65.7 & 61.9 & 98.5 & \textbf{61.1} & 60.4 & 55.9 \\
monks-3 & 17 & 2 & 52.0 & 96.4 & 96.8 & 90.3 & \textbf{96.4} & \textbf{96.4} \\
mutag & 51 & 2 & 66.5 & 83.0 & 84.6 & 63.9 & 66.9 & \textbf{73.1} \\
clevr-hans3 & 105 & 3 & 33.3 & 100.0 & 100.0 & 66.4 & 65.4 & \textbf{81.5} \\
\addlinespace
mean over 5 non-tab & --- & --- & --- & --- & --- & 71.2 & 72.6 & \textbf{76.3} \\
\addlinespace
\textbf{mean over 19 datasets} & --- & --- & --- & --- & --- & 70.2 & 68.6 & \textbf{76.2} \\
\quad mean over 16 binary & --- & --- & --- & --- & --- & 73.0 & 71.3 & \textbf{76.1} \\
\quad mean over 3 one-vs-rest & --- & --- & --- & --- & --- & 55.1 & 54.4 & \textbf{77.1} \\
\bottomrule
\end{tabular}

\end{table}

\paragraph{Exported-rule quality versus symbolic learners.}
Table~\ref{tab:real-summary}b (body) compares the exported G-NRI rule against two per-dataset-trained symbolic learners, the Classification and Regression Trees (CART) learner~\citep{Breiman1984CART} and the RIPPER rule learner~\citep{Cohen1995RIPPER}. G-NRI induces by far the sparsest DNFs but, as a single frozen zero-shot model, does not match their per-dataset accuracy. We also attempted Bayesian Rule Lists (BRL)~\citep{Letham2015Bayesian} as a third symbolic anchor, but it did not converge on any fold under our protocol, so we do not report it. Clause F1 is recovery of ground-truth target clauses. We report cross-seed stability (Seed Jac.), but it is not like-for-like. For G-NRI it varies across independently pretrained checkpoints, whereas for CART and RIPPER it reflects fit-time randomness on fixed data.

\paragraph{Example exported rules.}
Table~\ref{tab:rule-examples-full} lists the representative seed-$42$ exported DNF (fold $0$, deduplicated) for the baseline and G-NRI on seven datasets, with the Boolean-schema feature names. For rule quality, the comparison to read is the exported rule's per-dataset accuracy: by the $8$-seed means in Table~\ref{tab:capability}, the G-NRI rule matches or beats the baseline on all but one of these seven, trailing only on \emph{diabetes} and there only marginally. A check mark flags a clause within a documented ground-truth target, which exists here only for \emph{monks-3}; both variants recover it at seed $42$, but only G-NRI exports it on every seed whereas the baseline occasionally collapses to a degenerate rule, so its mean accuracy is the higher. A bold dataset name marks where G-NRI selects substantially different literals from the baseline (literal-set Jaccard below $0.5$); the change is in \emph{which} literals it selects, not in rule size. On \emph{mushroom} it keys on the absence of a foul odour rather than stalk colour, on \emph{diabetes} it conjoins plasma glucose with age and body mass, and on \emph{spambase} and \emph{vote} it selects an almost disjoint feature set. On \emph{tic-tac-toe} it reuses the baseline literals and adds the centre-square condition the baseline omits. These are single exports; on real data the rules are otherwise seed-unstable (cross-seed stability in Table~\ref{tab:real-summary}b).

\begin{table}[htbp]
\centering
\caption{Representative exported DNF (seed $42$, fold $0$) for the baseline and G-NRI. $\checkmark$: clause within the ground-truth target; \textbf{bold}: rule differs substantially from the baseline; $\lnot$: negated literal.}
\label{tab:rule-examples-full}
\scriptsize
\begin{tabular}{@{}ll@{\hskip 0.6em}p{0.66\linewidth}@{}}
\toprule
Dataset & Method & Representative exported DNF (seed 42, fold 0) \\
\midrule
monks-3 & baseline & $\lnot$\,\texttt{body\_shape\_octagon} $\land$ $\lnot$\,\texttt{jacket\_blue}~$\checkmark$ \\
 & G-NRI & $\lnot$\,\texttt{body\_shape\_octagon} $\land$ $\lnot$\,\texttt{jacket\_blue}~$\checkmark$ \\
\addlinespace
\textbf{mushroom} & baseline & $\lnot$\,\texttt{stalk-root\_c} $\land$ $\lnot$\,\texttt{stalk-root\_r} $\land$ $\lnot$\,\texttt{stalk-color-above-ring\_g} $\land$ $\lnot$\,\texttt{stalk-color-below-ring\_g} \\
 & G-NRI & $\lnot$\,\texttt{odor\_f} $\land$ $\lnot$\,\texttt{gill-color\_b} $\land$ $\lnot$\,\texttt{ring-type\_l} $\land$ $\lnot$\,\texttt{spore-print-color\_h} \\
\addlinespace
tic-tac-toe & baseline & \texttt{top-right-square\_x} $\land$ \texttt{middle-middle-square\_x} $\land$ \texttt{bottom-left-square\_x} $\land$ \texttt{bottom-right-square\_x} \\
 & G-NRI & ($\lnot$\,\texttt{middle-middle-square\_o}) $\lor$ (\texttt{top-right-square\_x} $\land$ \texttt{middle-middle-square\_x} $\land$ \texttt{bottom-left-square\_x} $\land$ \texttt{bottom-right-square\_x}) \\
\addlinespace
\textbf{spambase} & baseline & $\lnot$\,\texttt{word\_freq\_hp\_gt\_median} $\land$ $\lnot$\,\texttt{word\_freq\_hpl\_gt\_median} $\land$ $\lnot$\,\texttt{word\_freq\_george\_gt\_median} $\land$ $\lnot$\,\texttt{word\_freq\_labs\_gt\_median} \\
 & G-NRI & ($\lnot$\,\texttt{word\_freq\_3d\_gt\_median} $\land$ $\lnot$\,\texttt{word\_freq\_remove\_gt\_median} $\land$ $\lnot$\,\texttt{word\_freq\_font\_gt\_median} $\land$ $\lnot$\,\texttt{word\_freq\_money\_gt\_median}) $\lor$ ($\lnot$\,\texttt{word\_freq\_3d\_gt\_median} $\land$ $\lnot$\,\texttt{word\_freq\_remove\_gt\_median} $\land$ $\lnot$\,\texttt{word\_freq\_000\_gt\_median} $\land$ $\lnot$\,\texttt{word\_freq\_money\_gt\_median}) \\
\addlinespace
credit-approval & baseline & $\lnot$\,\texttt{A4\_l} $\land$ $\lnot$\,\texttt{A5\_gg} $\land$ $\lnot$\,\texttt{A6\_r} $\land$ $\lnot$\,\texttt{A7\_z} \\
 & G-NRI & $\lnot$\,\texttt{A4\_l} $\land$ $\lnot$\,\texttt{A5\_gg} $\land$ $\lnot$\,\texttt{A6\_r} $\land$ $\lnot$\,\texttt{A7\_o} \\
\addlinespace
\textbf{vote} & baseline & ($\lnot$\,\texttt{physician-fee-freeze\_n} $\land$ $\lnot$\,\texttt{el-salvador-aid\_n} $\land$ $\lnot$\,\texttt{education-spending\_n} $\land$ $\lnot$\,\texttt{crime\_n}) $\lor$ ($\lnot$\,\texttt{physician-fee-freeze\_n} $\land$ \texttt{physician-fee-freeze\_y} $\land$ $\lnot$\,\texttt{el-salvador-aid\_n} $\land$ $\lnot$\,\texttt{crime\_n}) \\
 & G-NRI & (\texttt{physician-fee-freeze\_n} $\land$ $\lnot$\,\texttt{physician-fee-freeze\_y} $\land$ $\lnot$\,\texttt{export-administration-act-south-africa\_n} $\land$ \texttt{export-administration-act-south-africa\_y}) $\lor$ ($\lnot$\,\texttt{adoption-of-the-budget-resolution\_n} $\land$ \texttt{physician-fee-freeze\_n} $\land$ $\lnot$\,\texttt{physician-fee-freeze\_y} $\land$ $\lnot$\,\texttt{export-administration-act-south-africa\_n}) \\
\addlinespace
\textbf{diabetes} & baseline & \texttt{plas\_gt\_median} \\
 & G-NRI & (\texttt{plas\_gt\_median} $\land$ \texttt{age\_gt\_median}) $\lor$ (\texttt{plas\_gt\_median} $\land$ \texttt{mass\_gt\_median} $\land$ \texttt{age\_gt\_median}) \\
\bottomrule
\end{tabular}

\end{table}
\fi

\section{Schema-relative signed-DNF quotient equivariance}
\label{app:schema-quotient}

The raw $\groupBN$ test of the real-data rule-level evaluation applies arbitrary atom permutations and polarity flips, which break the one-hot exclusivity of categorical blocks and so probe the canonical export on inputs that no valid dataset can produce.
This appendix models the exported rule's output space as a signed-DNF quotient space and proves the canonical export equivariant on it under any subgroup $H \le \groupBN$ (Theorem~\ref{thm:schema-quotient}). Specialised to the on-manifold subgroup that preserves a dataset's categorical schema, the result explains the raw $\groupBN$ exceptions of Section~\ref{sec:results-ruleeq} as score-equivariance violations under schema-invalid transforms, not decoder failures.

\subsection{The signed-DNF quotient space}

Let $[N] = \{1, \dots, N\}$ and let $L_N = [N] \times \{+,-\}$ be the set of literals over $N$ atoms.
The signed-permutation group $\groupBN = \groupSN \ltimes \groupZTwoN$ acts on literals through $h = (\pi, f)$ by $(i, \epsilon) \mapsto (\pi(i), \epsilon \oplus f_i)$, where $\pi$ permutes atom indices and $f \in \{0,1\}^N$ records a per-atom polarity flip.
Define the \emph{signed-DNF quotient space}
\[
Q_N^{\pm} \;=\; \big(\text{finite sets of consistent clauses}\big) \times \{0,1\},
\]
where a clause is a set of literals. A clause is \emph{consistent} when it does not contain both $(i,+)$ and $(i,-)$ for any atom $i$, and the second factor is the output-polarity rail bit $b$ carried by the dual-rail selector.
The quotient is purely syntactic (literal and clause order, padding, duplicates, and deletion of contradictory clauses), neither the semantic Boolean-equivalence quotient nor the group-action quotient. We write $q$ for the quotient map and $[\,\cdot\,]$ for a class.

For a dataset whose atoms are partitioned into one-hot categorical blocks $\{B_1, \dots, B_k\}$, the \emph{schema symmetry subgroup} is
\[
H_\Sigma \;=\; \textstyle\prod_{b=1}^{k} \mathrm{Sym}(B_b) \;\le\; \groupSN,
\]
the within-block index permutations, with $f = 0$ (no polarity flips).
Every $h \in H_\Sigma$ maps a one-hot input to a one-hot input, so $H_\Sigma$ acts on the data manifold, whereas a generic element of $\groupBN$ does not.

\subsection{Hypotheses}

The canonical export is the composition $E$ of the score map $\Phi:(X,Y)\mapsto(p^{+},p^{-})$ with the per-rail canonical decoder of Algorithm~\ref{alg:quotient} and the dual-rail selector.
We require two hypotheses for a subgroup $H \le \groupBN$.

\paragraph{(H1) Score-map equivariance.}
The per-literal soft scores satisfy $p^{\epsilon \oplus f_i}_{t, \pi(i)}(h \cdot x) = p^{\epsilon}_{t, i}(x)$ for every $h = (\pi, f) \in H$ and every slot $t$.
For $H_\Sigma$ the flip vector is $f = 0$, so the scores only permute within blocks.

\paragraph{(H2) Decoder equivariance.}
Literal selection, the presence score $s_j = \max(p^{+}_j, p^{-}_j)$, the polarity $\mathbf{1}[d_j>0]$, clause gating, and rail-bit selection all commute with the action of $H$, with ties resolved by an $H$-invariant rule or avoided by a strict decode margin (the margin-stability argument of Appendix~\ref{app:proof-quotient}). The whole-tie-bucket selection of Algorithm~\ref{alg:quotient} makes this hold. A raw atom-index tie-break would not be $H$-equivariant.

\begin{table}[t]
\centering
\caption{Schema-relative rule-level equivariance ($3$ seeds): worst-case rule\_eq under schema-valid and raw off-manifold $\groupBN$ transforms (column definitions in the text).}
\label{tab:schema-ruleeq}
\footnotesize
\begin{tabular}{@{}l r r r c c@{}}
\toprule
Dataset & $N$ & \#blocks & $\log_{10}|H_\Sigma|$ & schema rule\_eq (all/ne) & raw $B_N$ rule\_eq \\
\midrule
\multicolumn{6}{@{}l}{\emph{Nontrivial schema group}} \\
adult & 105 & 8 & 87.33 & 1.000/1.000 & 1.000 \\
mushroom & 116 & 21 & 58.16 & 1.000/1.000 & \textbf{0.905}$^\dagger$ \\
german-credit & 61 & 13 & 21.25 & 1.000/1.000 & 1.000 \\
credit-approval & 46 & 9 & 20.04 & 1.000/1.000 & 1.000 \\
kr-vs-kp & 73 & 36 & 11.31 & 1.000/1.000 & 1.000 \\
nursery & 27 & 8 & 8.25 & 1.000/1.000 & 1.000 \\
tic-tac-toe & 27 & 9 & 7.00 & 1.000/1.000 & 1.000 \\
car & 21 & 6 & 6.48 & 1.000/1.000 & 1.000 \\
vote & 32 & 16 & 4.82 & 1.000/1.000 & 1.000 \\
monks-1 & 17 & 6 & 4.32 & 1.000/1.000 & 1.000 \\
monks-2 & 17 & 6 & 4.32 & 1.000/1.000 & 1.000 \\
monks-3 & 17 & 6 & 4.32 & 1.000/1.000 & 1.000 \\
hepatitis & 32 & 13 & 3.91 & 1.000/1.000 & 1.000 \\
\addlinespace
\multicolumn{6}{@{}l}{\emph{Trivial schema group ($H_\Sigma{=}\{e\}$)}} \\
breast-cancer-wisconsin & 9 & 0 & 0.00 & 1.000/1.000 & 1.000 \\
clevr-hans3 & 105 & 0 & 0.00 & 1.000/1.000 & 1.000 \\
diabetes & 8 & 0 & 0.00 & 1.000/1.000 & 1.000 \\
ionosphere & 34 & 0 & 0.00 & 1.000/1.000 & 1.000 \\
mutag & 51 & 0 & 0.00 & 1.000/1.000 & \textbf{0.875}$^\dagger$ \\
spambase & 57 & 0 & 0.00 & 1.000/1.000 & 1.000 \\
\bottomrule
\end{tabular}

\end{table}

\subsection{Quotient action and canonicalisation}

Two facts underlie the theorem.
The $\groupBN$ action descends to a well-defined action on $Q_N^{\pm}$. Here $h = (\pi, f)$ maps each complementary pair $\{(i,+),(i,-)\}$ to $\{(\pi(i),+),(\pi(i),-)\}$, so inconsistent clauses map to inconsistent clauses. As a bijection on $L_N$ it preserves the remaining syntactic equivalences (duplicate and ordered literals and clauses, padding) and fixes the rail bit $b$.
The sorted-and-padded canonicaliser $\kappa$ (literals in ascending $(\text{atom}, \text{polarity})$ order, clauses in a fixed canonical order, contradictory clauses removed) makes $C = \kappa \circ q$ constant on quotient classes and idempotent (re-applying it changes nothing).

\subsection{Theorem}

\begin{thm}[\textbf{Schema-relative signed-DNF quotient equivariance}]
\label{thm:schema-quotient}
Let $H \le \groupBN$ and assume (H1) and (H2) hold for $H$.
Then for all $h \in H$ and all inputs $x$,
\[
C\big(h \cdot C(E(x))\big) \;=\; C\big(E(h \cdot x)\big),
\qquad\text{equivalently}\qquad
[E(h \cdot x)] \;=\; h \cdot [E(x)] \ \text{ in } Q_N^{\pm}.
\]
\end{thm}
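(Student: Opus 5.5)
The approach is to factor the export $E$ into the score map $\Phi$, the per-rail canonical decoder $D$ of Algorithm~\ref{alg:quotient}, and the dual-rail selector $S$, and then show that each stage intertwines the action of $h=(\pi,f)\in H$. Passing everything to the quotient $Q_N^{\pm}$ then gives the class-level identity $[E(h\cdot x)]=h\cdot[E(x)]$. The displayed identity $C(h\cdot C(E(x)))=C(E(h\cdot x))$ follows from the two canonicaliser facts above. First, $C=\kappa\circ q$ is constant on quotient classes. Second, the $\groupBN$ action descends to $Q_N^{\pm}$. So $C(h\cdot C(E(x)))=C(h\cdot E(x))$, because $C(E(x))$ and $E(x)$ lie in the same class and $h$ respects classes. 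The theorem therefore reduces to showing that $h\cdot E(x)$ and $E(h\cdot x)$ lie in the same class of $Q_N^{\pm}$.

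\emph{Score stage.} By (H1), the scores at $h\cdot x$ are the relabelled scores at $x$: slot $t$, literal $(\pi(i),\epsilon\oplus f_i)$ carries what $(i,\epsilon)$ carried. Consequently $s_{\pi(i)}(h\cdot x)=s_i(x)$, since $\max$ is symmetric in its two arguments. Also $d_{\pi(i)}(h\cdot x)=(-1)^{f_i}d_i(x)$. The clause gates $w_t$ are untouched, and I will take them to be $H$-invariant as part of (H1)/(H2).

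\emph{Per-slot decode.} I will check, slot by slot, that the set $\mathcal{C}$ at $h\cdot x$ is $\pi(\mathcal{C})$ at $x$. The tests $s\ge\tfrac12$ and $|d|>0$ are invariant under $\pi$ and sign flips, so this holds. The sorted presence scores form the same multiset, so the tie buckets (within $\mathrm{tie\_eps}$) correspond under $\pi$. The greedy whole-bucket admission with budget $b$ reads only the bucket sizes, so the admitted set is $\pi(\mathcal{S})$. The polarity bit $\mathbf{1}[d>0]$ flips exactly when $f_i=1$, and $d_i\neq0$ guarantees this is well defined. Each emitted clause is therefore $h$ applied to the old clause as a set of literals. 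The clause may come out in a different order, but that is absorbed by $q$.

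\emph{Clause-set cleanup and rail.} Duplicate removal commutes with $h$ because $h$ is a bijection on $L_N$. Contradictory-clause removal commutes with $h$ because $h$ maps complementary pairs to complementary pairs. For the rail bit, the selector reads $\Rpred^{\pm}$ and the per-rail fits, and (H2) states that rail-bit selection commutes with $H$. Since $H\le\groupBN$ fixes the rail bit $b$, the selector must return the same bit, or abstain in both frames. Combining these steps, $E(h\cdot x)$ equals $h\cdot E(x)$ up to literal order, clause order, padding and duplicates, which are exactly the relations $q$ identifies.

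The main obstacle is the tie-bucket step. Bucketing within $\mathrm{tie\_eps}$ by a sequential sweep over sorted scores could depend on the order among exactly equal scores. My plan is to argue that the buckets are a function of the sorted score multiset alone, so that permuting equal scores leaves the bucket sizes and memberships unchanged up to $\pi$. Where this fails numerically, I would instead invoke the strict-margin alternative in (H2) via the margin-stability argument of Appendix~\ref{app:proof-quotient}.
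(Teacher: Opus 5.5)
Your proposal is correct and follows essentially the paper's route. (H1) transports each score pair to atom $\pi(i)$, swapping the two entries when $f_i=1$. The decoder reads only $s_j$, the sign of $d_j$, the clause gates, whole tie buckets and the rail key, so the decoded clauses at $h\cdot x$ are the $h$-image of those at $x$, and descent of the action to $Q_N^{\pm}$ together with constancy of $C$ on classes gives both forms of the identity. Your slot-by-slot check, and your argument that tie buckets depend only on the sorted score multiset, spell out what the paper asserts in one line.
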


\begin{proof}
By (H1) the score map commutes with $h = (\pi, f) \in H$: each pair $(p^{+}_i, p^{-}_i)$ is sent to atom $\pi(i)$, and the polarity flip $f_i$ swaps $p^{+}$ and $p^{-}$ at that atom.
By (H2) the canonical decoder reads only the presence score $s_j$, the contrast sign $\mathrm{sign}(d_j)$, the clause gates, and the rail key, each of which commutes with the action of $h$. The flip $f_i$ leaves $s_i = \max(p^{+}_i, p^{-}_i)$ unchanged and negates $d_i$, exchanging the literals $(i,+)$ and $(i,-)$, while the permutation $\pi$ relabels atoms, and the tie-bucket selection consults only the multiset of presence scores, never an atom index.
Hence the multiset of decoded clauses produced from $h \cdot x$ equals the image under $h$ of the multiset produced from $x$, as literal sets.
Since the action descends to $Q_N^{\pm}$, this image is well defined, so $[E(h \cdot x)] = h \cdot [E(x)]$.
Applying the canonicaliser $C$, which is constant on classes, gives $C(E(h\cdot x)) = C(h \cdot C(E(x)))$.
\end{proof}

Specialising to $H_\Sigma$ (where $f = 0$) gives exact equivariance on the data manifold: the canonical export commutes with every within-block relabelling of one-hot categorical features.
The example-axis group $\groupSM$ acts trivially on the rule, and $\groupZTwoY$ toggles only the rail bit $b$, so the full schema group $G_\Sigma = \groupSM \times H_\Sigma \times \groupZTwoY$ is covered by the same argument.

\paragraph{Exact versus approximate.}
Theorem~\ref{thm:schema-quotient} is exact given (H1) and (H2).
When the score map is only approximately equivariant, exact rule equality still follows whenever the score drift stays below the decode margins of the margin-stability argument, since every discrete decision then agrees between the two frames.
This explains the raw $\groupBN$ exceptions of Section~\ref{sec:results-ruleeq}. They are violations of (H1), a forward-score drift induced by schema-invalid cross-block permutations that crosses the margin, not failures of the decoder (H2), which holds by construction.

\subsection{Schema-valid empirical equivariance}

We instantiate $H_\Sigma$ per dataset from its one-hot block structure and re-run the rule-level equivariance test under the schema-valid transforms only.
Of the $19$ datasets, $13$ have a nontrivial schema group $H_\Sigma$; the other $6$ (\emph{breast-cancer-wisconsin}, \emph{clevr-hans3}, \emph{diabetes}, \emph{ionosphere}, MUTAG, \emph{spambase}) have $H_\Sigma = \{e\}$, so their schema rule\_eq of $1.000$ is vacuous; Table~\ref{tab:schema-ruleeq} reports them in a separate block.
On all $13$ nontrivial-schema datasets, across $3$ seeds and every schema-valid transform, schema rule\_eq (both all pairs and nonempty pairs) equals $1.000$ exactly, with an unresolved rail-tie rate of $0$, matching the prediction of Theorem~\ref{thm:schema-quotient}.
In Table~\ref{tab:schema-ruleeq}, ``\#blocks'' counts the one-hot blocks, $\log_{10}|H_\Sigma|$ gives the schema subgroup's order, ``schema rule\_eq (all/ne)'' is the worst case over seeds and schema-valid transforms, ``raw $\groupBN$'' the worst case under this audit's single schema-invalid transform, and $\dagger$ flags the two datasets falling below $1$.

\paragraph{Per-dataset notes.}
\looseness=-1
The main recovery is \emph{mushroom}, which has a large nontrivial schema group ($\log_{10}|H_\Sigma| = 58.2$ over $21$ one-hot blocks) and fails this audit's raw $\groupBN$ transform (rule\_eq per seed $0.965 / 0.925 / 0.905$). Under $H_\Sigma$ its schema rule\_eq is $1.000$ with $100\%$ nonempty coverage and non-degenerate rules ($\approx 1.2$ clauses, $4.3$ literals on average), so the raw failure was a schema-invalidity artefact rather than a decoder defect.
MUTAG is the one raw $\groupBN$ failure (worst-case rule\_eq $0.875$) that schema-relativity does not explain. Its $51$ features are non-mutually-exclusive chemical indicators with no one-hot blocks, so $H_\Sigma = \{e\}$. Resolving it would require a domain-specific symmetry beyond categorical schema, which we leave open.
Finally, the One-vs-Rest positive-class-$2$ slice of \emph{nursery} produces all-empty (abstaining) rules for seeds $42$ and $43$, so equivariance on that slice is a trivial empty match (its nonempty coverage is $0$). The other slices of \emph{nursery} are genuine.

\end{document}